\newtheorem{theorem}{Theorem}
\newtheorem{corollary}[theorem]{Corollary}
\newtheorem{definition}[theorem]{Definition}
\newtheorem{lemma}[theorem]{Lemma}
\newtheorem{notation}[theorem]{Notation}
\newtheorem{proposition}[theorem]{Proposition}
\newtheorem{remark}[theorem]{Remark}
\def\RE{\mathbb R}
\def\CO{{\mathbb C}}
\def\F{\mathcal F}
\def\S{\mathcal S}
\def\L{\mathscr L}
\begin{document}

\title[Metaplectic formulation...]{Metaplectic formulation of the Wigner transform and applications}

\author{Nuno Costa Dias \and Maurice A. de Gosson \and Jo\~{a}o Nuno Prata}

\email{ncdias@meo.pt}
\email{maurice.de.gosson@univie.ac.at}
\email{joao.prata@mail.telepac.pt }


\begin{abstract}
We show that the cross Wigner function can be written in the form $W(\psi, \phi)= \hat S (\psi \otimes
\overline{\hat\phi})$ where ${\hat\phi}$ is the Fourier transform of $\phi$ and $\hat S$ is a metaplectic
operator that projects onto a linear symplectomorphism $S$ consisting of a rotation along an ellipse in phase
space (or in the time-frequency space). This formulation can be extended to generic Weyl symbols and yields an
interesting fractional generalization of the Weyl-Wigner formalism. It also provides a suitable approach to
study the Bopp phase space representation of quantum mechanics, familiar from deformation quantization. Using
the "metaplectic formulation" of the Wigner transform we construct a complete set of intertwiners relating the
Weyl and the Bopp pseudo-differential operators. This is an important result that allows us to prove the
spectral and dynamical equivalence of the Schr\"odinger and the Bopp representations of quantum mechanics.
\end{abstract}
\maketitle

MSC [2000]:\ Primary 35S99, 35P05; Secondary 35S05, 45A75

\section{Introduction}

It is well known that the cross Wigner function
\begin{equation}
W(\psi, \phi)(x,p)=\frac{1}{(2 \pi)^{n}} \int \, e^{ip\cdot\xi_p} \psi(x-\tfrac{\xi_p}{2})
\overline{\phi}(x+\tfrac{\xi_p}{2})\, d\xi_p \label{WT}
\end{equation}
defined for arbitrary $\phi,\psi \in L^2(\RE^n)$, satisfies the {\it Moyal identity}
$$
||W(\psi,\phi)||_{L^2(\RE^{2n})}=\frac{1}{(2 \pi)^{n/2}}||\phi ||_{L^2(\RE^{n})}||{\psi}||_{L^2(\RE^{n})} \, .
$$
Hence, for $||{\phi}||_{L^2(\RE^{n})}=1$, the mapping
\begin{equation}
W_\phi : L^2(\RE^n) \longrightarrow L^2(\RE^{2n});\, \psi \longmapsto W_\phi \psi ={(2 \pi)^{n/2}} W(\psi,\phi)
\label{WT2}
\end{equation}
called the {\it windowed Wigner transform}, is a non-surjective isometry $L^2(\RE^n) \to L^2(\RE^{2n})$.

In this paper we will further show that $W_\phi$ can be written in the form
\begin{equation}
W_\phi \psi = \hat S(\psi \otimes \overline{\hat \phi}) \label{MFWT}
\end{equation}
where $\hat S=e^{-i \theta_0 \hat H}$ is the unitary operator generated by the self-adjoint Hamiltonian
\begin{equation}
\hat H=2 \hat\xi_x\cdot\hat\xi_p-\left(\hat x\cdot\hat\xi_x \right)_+ -\left(\hat p\cdot\hat\xi_p\right)_++4\hat
x \cdot\hat p \,, \label{OpH}
\end{equation}
and $\theta_0$ is a suitable value of the scalar parameter. In (\ref{OpH}) the subscript $+$ denotes
symmetrization and
\begin{equation}
\hat x=x\cdot\quad , \quad \hat\xi_x=-i\partial_x \quad \mbox{and} \quad \hat p=p\cdot \quad , \quad
\hat\xi_p=-i\partial_p \label{SchRep}
\end{equation}
are the fundamental operators in the {\it phase space Schr\"odinger} representation of the Heisenberg algebra
(thus acting on functions $\Psi(x,p) \in D(\hat H) \subset L^2(\RE^{n}\times \RE^n)$) \cite{Dias12}.

Since $\hat H$ is quadratic, the transformation $\hat S$ is
metaplectic \cite{Folland,Gosson06,Gosson11}. Its projection onto
the group of linear symplectic transformations Sp$(4n)$ of
$\RE^{2n}\times \RE^{2n}$ is a symplectomorphism generated by the
classical Hamiltonian
\begin{equation}
H= 2\xi_x\cdot \xi_p -x\cdot\xi_x-p\cdot\xi_p+4x\cdot p \label{CH}
\end{equation}
and consists of a rotation along an ellipse in the phase space.

The "metaplectic formulation" of the Wigner function can be extended to generic Weyl operators. The resulting
formalism yields new calculation tools for the Weyl-Wigner calculus and displays several interesting
applications. We will use it to study a new topic and to revisit an old subject.

The new topic is a fractional generalization of the Weyl-Wigner formalism. Using the metaplectic approach we
construct a fractional version of both the Wigner function and the Weyl symbols. Moreover, we generalize several
of the key properties of the standard Weyl-Wigner framework to the fractional case. These include the regularity
properties, inversion and kernel formulas and the composition formula \cite{Gosson11,Wong}. We also study the
main features of the associated fractional quantization.

The old topic is the Bopp representation of quantum mechanics
\cite{bopp,Bracken,Dias12,Gosson11}. This is a phase space
operator representation that was recently used
\cite{Dias12,GoLu08} to prove precise spectral results for the
deformation quantization of Bayen {\it et al}
\cite{Bayen1,Bayen2,Dias04-1,Dias04-2}. A similar formulation was
used in \cite{Gosson08,GoLu10} to study the spectral properties of
generalized Landau operators \cite{LaLi97}, and in
\cite{DiGoLuPa10,DiGoLuPa11} to address the spectral problem in
noncommutative quantum mechanics. In another recent paper
\cite{Bondar}, a generalization of the Bopp representation was
used to address the problem of determining a consistent
formulation of coupled classical quantum dynamics
\cite{Dias01,Salcedo}. The results of \cite{Bondar} suggest that
the phase space representations are more suitable to address this
problem than the standard configurational space ones. Here, we
will focus on the mathematical formalism and use the metaplectic
representation of the Wigner transform to determine a complete
family of non-surjective isometries that intertwines the Weyl and
the Bopp pseudo-differential operators. A straightforward
consequence of this result is that the Schr\"odinger and the Bopp
representations of quantum mechanics (although not unitarily
related) display equivalent spectral and dynamical properties.

This paper is organized as follows: in the next section we present the metaplectic formulation of the Wigner
function (section 2.2) and Weyl symbols (section 2.3). In section 3 we define and study the fractional
generalization of the Weyl calculus. In section 4 we use the metaplectic formulation to study the spectral and
dynamical properties of the Bopp representation of quantum mechanics.

\begin{notation} A generic point in the double phase space $\RE^{4n}=\RE^{2n} \oplus
\RE^{2n}$ is denoted by $z=(x,p,\xi_x,\xi_p)$ where $(x,\xi_x)\in
\RE^{2n}$ and $(p,\xi_p)\in \RE^{2n}$ are canonical conjugate
pairs. The standard sympletic form on $\RE^{4n}$ is
$\sigma(z,z')=\xi_x \cdot x' +\xi_p \cdot p' -\xi_x' \cdot x
-\xi_p' \cdot p$ and the corresponding sympletic group is
Sp$(4n)$. The metaplectic group Mp$(4n)$ is the unitary
representation of the double cover of Sp$(4n)$.
\end{notation}

\begin{notation}
The norm and the inner product in $L^2(\RE^n)$ are denoted by $|| \cdot ||_{L^2(\RE^n)}$ and $( \cdot
,\cdot)_{L^2(\RE^n)}$, or just by $|| \cdot ||$ and $(\cdot , \cdot)$ whenever the dimension of the functions'
domain is clear from the context. The distributional bracket is $\langle \cdot , \cdot \rangle$.

For $f:\RE^n \longrightarrow \CO$ and $g:\RE^n \longrightarrow
\CO$, $f\otimes g$ denotes the tensor product function
$$
f\otimes g:\RE^n \times \RE^n \longrightarrow \CO; \, (x,y) \longmapsto f\otimes g (x,y)=f(x)g(y)
$$

Operators are denoted by roman letters with a hat (the exceptions
are $\hat\xi_x$ and $\hat\xi_p$). The hat also denotes the Fourier
transform (Notation 3) but it should always be clear from the
context what it refers to.
\end{notation}

\begin{notation}
The partial Fourier and inverse Fourier transforms are written:
$$
\hat\Psi(x,\xi_p)=\F_{p\rightarrow \xi_p}[\Psi(x,p)]=
\frac{1}{(2\pi)^{n/2}}\int_{\RE^n} \, e^{-i\xi_p\cdot p}\Psi(x,p)
\, dp
$$
and
$$
\check\Psi(x,p)=\F_{\xi_p\rightarrow p}^{-1}[\Psi(x,\xi_p)]=
\frac{1}{(2\pi)^{n/2}}\int_{\RE^n} \, e^{i\xi_p\cdot
p}\Psi(x,\xi_p) \, d\xi_p
$$
and are defined as unitary operators in $L^2(\RE^{2n})$ (for the Fourier transform in $L^2(\RE^{n})$ we use
exactly the same notation). As usual they can be extended by duality to $\S'(\RE^{2n})$. Notice that when we
write $\F_{p\rightarrow \xi_p}[\Psi(x,p)]$, $(x,p)$ is the argument of the original function and the result is a
function of $(x,\xi_p)$. On the other hand, if we write $\F_{p\rightarrow \xi_p}\Psi(x,\xi_p)$ then $(x,\xi_p)$
is already the argument of $\F_{p\rightarrow \xi_p}\Psi$.

\end{notation}

\section{Metaplectic formulation of the Wigner transform}

In this section we prove our main result. It states that $W(\psi,\phi)= (2\pi)^{-n/2} \hat S(\psi \otimes
\overline{\hat \phi})$ where $\hat S=e^{-i \theta_0 \hat H}$ is the unitary operator generated by the
Hamiltonian (\ref{OpH}) for $\theta_0=\tfrac{\sqrt{7}}{7}\arccos \tfrac{3}{4}$. We will also extend this
formulation to generic Weyl symbols.

\subsection{The classical symplectomorphism}

The unitary transformation $\hat S$ is generated by the quadratic
Hamiltonian (\ref{OpH}) and thus belongs to the metaplectic group
Mp$(\RE^{4n})$ \cite{Folland,Gosson97,Gosson11,Leray}. Its
projection onto the group of symplectic transformations of
$(\RE^{4n}, \sigma)$ is a symplectomorphism $S$ belonging to the
one-parameter group of symplectomorphisms $s(\theta)$ generated by
the classical Hamiltonian (\ref{CH}). In order to prepare our main
results, we now determine the explicit form of $s(\theta)$ and
$S$.

\begin{theorem}\label{CSym}
Let $s:\RE\times \RE^{4n} \longrightarrow \RE^{4n}$ be the one-parameter group of symplectic transformations
generated by the classical Hamiltonian
$$
H=2\xi_x\cdot \xi_p -x\cdot\xi_x-p\cdot\xi_p+4x\cdot p.
$$
For $\theta_0=\tfrac{\sqrt{7}}{7}\arccos \tfrac{3}{4}$ the symplectomorphism $S=s(\theta_0)$ is explicitly
\begin{equation}
S:\RE^{4n} \longrightarrow \RE^{4n}, \quad \left\{
\begin{array}{l}
x \longmapsto x/2+\xi_p/2 \\
\xi_x \longmapsto \xi_x -p\\
p \longmapsto p/2+\xi_x/2\\
\xi_p \longmapsto \xi_p -x
\end{array} \right. . \label{ST}
\end{equation}
Moreover, $S$ is the natural projection of $\hat S = e^{-i
\theta_0 \hat H}\in$Mp$(4n)$ onto Sp$(4n)$.
\end{theorem}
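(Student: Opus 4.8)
The plan is to work out the Hamiltonian flow explicitly by reducing to a linear ODE system and then integrating it. First I would write the classical Hamiltonian $H=2\xi_x\cdot\xi_p-x\cdot\xi_x-p\cdot\xi_p+4x\cdot p$ as a quadratic form $H=\tfrac12 z^{T}M z$ (up to the usual symmetrization) on $\RE^{4n}$ with $z=(x,p,\xi_x,\xi_p)$, and compute Hamilton's equations $\dot z = J\,\partial_z H =: \mathbb{A}z$, where $J$ is the standard symplectic matrix associated with $\sigma$. Concretely one gets $\dot x = \partial H/\partial\xi_x = 2\xi_p - x$, $\dot p = \partial H/\partial\xi_p = 2\xi_x - p$, $\dot\xi_x = -\partial H/\partial x = \xi_x - 4p$, $\dot\xi_p = -\partial H/\partial p = \xi_p - 4x$ (signs to be fixed against the conventions in Notation~1). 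Since $H$ is a sum of $n$ identical decoupled scalar blocks, it suffices to analyze the $4\times4$ matrix $\mathbb{A}$ acting on a single coordinate quadruple, and then $s(\theta)=\exp(\theta\mathbb{A})$ acts blockwise.

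Next I would diagonalize (or at least find the spectral structure of) the $4\times4$ matrix $\mathbb{A}$. I expect the eigenvalue equation to factor nicely: the coupling $x\leftrightarrow\xi_p$ and $p\leftrightarrow\xi_x$ together with the symmetry of the system should give eigenvalues of the form $\pm 1 \pm i\sqrt{7}$ or similar — the appearance of $\sqrt 7$ in $\theta_0$ and of $\arccos\tfrac34$ (note $\sin(\arccos\tfrac34)=\tfrac{\sqrt7}{4}$, so $\cos(\sqrt7\,\theta_0)=\tfrac34$, $\sin(\sqrt7\,\theta_0)=\tfrac{\sqrt7}{4}$) strongly suggests that after peeling off a common real exponential factor the flow is a genuine rotation with angular frequency $\sqrt7$, which closes up to the stated map precisely at $\theta_0$. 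So I would look for a change of variables (a symplectic conjugation $z\mapsto Pz$) bringing $\mathbb{A}$ to the form (scalar)$\cdot I$ plus a skew block of ``frequency'' $\sqrt7$; then $\exp(\theta\mathbb{A})$ is an explicit $2\times2$-block expression in $\cos(\sqrt7\theta)$, $\sin(\sqrt7\theta)$, times a scalar factor coming from the non-semisimple or trace part. Substituting $\theta=\theta_0$ and using $\cos(\sqrt7\theta_0)=\tfrac34$, $\sin(\sqrt7\theta_0)=\tfrac{\sqrt7}{4}$ should collapse everything to the rational matrix displayed in (\ref{ST}). As a sanity check one verifies directly that the matrix in (\ref{ST}) is symplectic with respect to $\sigma$ (it should be, being a time-$\theta_0$ map of a Hamiltonian flow) and that it indeed lies on the flow, e.g. by checking $S$ commutes with $\mathbb{A}$ in the appropriate sense or simply by re-deriving $S$ from the closed form of $\exp(\theta\mathbb{A})$.

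For the last sentence — that $S$ is the projection of $\hat S=e^{-i\theta_0\hat H}$ onto $\mathrm{Sp}(4n)$ — I would invoke the standard correspondence between quadratic Hamiltonians and the metaplectic representation: if $\hat H$ is the Weyl quantization of the quadratic form $H$ (with the operators (\ref{SchRep}) being the phase-space Schrödinger operators, so that $\hat H$ is exactly the symmetrized version (\ref{OpH}) of (\ref{CH})), then $t\mapsto e^{-it\hat H}$ is a one-parameter subgroup of $\mathrm{Mp}(4n)$ whose projection to $\mathrm{Sp}(4n)$ is precisely the Hamiltonian flow $s(t)$ of $H$; this is the content of the references \cite{Folland,Gosson06,Gosson11}. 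Evaluating at $t=\theta_0$ gives the claim. The main obstacle I anticipate is purely computational bookkeeping: getting the sign conventions of $\sigma$ and $J$ consistent with Notation~1 (so that the flow comes out as (\ref{ST}) and not its inverse or a sign-twisted variant), and correctly handling the non-normal part of $\mathbb{A}$ if it turns out not to be diagonalizable — though given the clean rational answer I suspect $\mathbb{A}$ is in fact semisimple with eigenvalues $\tfrac12(-1\pm i\sqrt7)$ each of multiplicity $2n$, in which case the computation of $\exp(\theta\mathbb{A})$ is routine once the eigenbasis is in hand.
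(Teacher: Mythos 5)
Your overall route is the same as the paper's: write down Hamilton's equations, observe that they decouple into the two pairs $(x,\xi_p)$ and $(p,\xi_x)$, exponentiate the resulting linear system, evaluate at $\theta_0$ using $\cos(\sqrt{7}\,\theta_0)=\tfrac34$, $\sin(\sqrt{7}\,\theta_0)=\tfrac{\sqrt7}{4}$, and then invoke the standard correspondence between quadratic Hamiltonians and one-parameter subgroups of the metaplectic group for the last sentence. All of that is exactly what the paper does.

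However, your predicted spectral structure is wrong, and following it literally would lead you astray. Each decoupled block is
$\mathbb{A}=\left(\begin{smallmatrix}-1&2\\-4&1\end{smallmatrix}\right)$
(acting on $(x,\xi_p)$, and identically on $(p,\xi_x)$), which is \emph{traceless} with determinant $7$; hence its eigenvalues are $\pm i\sqrt{7}$, not $\tfrac12(-1\pm i\sqrt7)$, and there is no ``common real exponential factor'' to peel off. Indeed $\mathbb{A}^2=-7I$, so
$$
\exp(\theta\mathbb{A})=\cos(\sqrt{7}\,\theta)\,I+\tfrac{1}{\sqrt{7}}\sin(\sqrt{7}\,\theta)\,\mathbb{A},
$$
which is precisely the closed form the paper records in (\ref{T1})--(\ref{T2}); this also explains why the flow is genuinely periodic with period $2\pi/\sqrt7$ and why its orbits are ellipses (a spiral with decay rate $e^{-\theta/2}$, as your eigenvalue guess would imply, could not close up and could not reproduce the rational matrix (\ref{ST})). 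Once you replace your guessed eigenvalues by $\pm i\sqrt7$ the rest of your plan goes through verbatim and reproduces the paper's proof.
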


\begin{proof}
The map $s(\theta)=(x(\theta),p(\theta),\xi_x(\theta),\xi_p(\theta))$ is defined by the Hamilton equations
\begin{equation*}
\left\{ \begin{array}{l}
\dot x = \frac{\partial H}{\partial \xi_x}=  2\xi_p -x  \\
\dot p  = \frac{\partial H}{\partial \xi_p}= 2\xi_x -p \\
\dot \xi_x = -\frac{\partial H}{\partial x}= \xi_x -4p    \\
\dot \xi_p = -\frac{\partial H}{\partial p}= \xi_p -4x
\end{array} \right.
\end{equation*}
which decouple into two systems
\begin{equation*}
\left\{ \begin{array}{l}
\dot x = \frac{\partial H}{\partial \xi_x}=  2\xi_p -x \\
\dot \xi_p = -\frac{\partial H}{\partial p}=  \xi_p -4x
\end{array} \right. \quad ,\quad \left\{ \begin{array}{l}
\dot p  = \frac{\partial H}{\partial \xi_p}=  2\xi_x -p \\
\dot \xi_x = -\frac{\partial H}{\partial x}= \xi_x -4p
\end{array} \right.
\end{equation*}
with solutions ($k=\sqrt{7}$)
\begin{equation}
\left\{ \begin{array}{l} x(\theta)=x(0)\left[\cos k\theta - k^{-1} \sin k \theta \right] + 2k^{-1}\xi_p(0)
\sin k\theta \\
\\
\xi_p(\theta)=\xi_p(0)\left[\cos k\theta + k^{-1} \sin k\theta \right] -  4k^{-1} x(0) \sin k\theta
\end{array}
\right. \label{T1}
\end{equation}
and
\begin{equation}
\left\{ \begin{array}{l} p(\theta)=p(0)\left[\cos k\theta - k^{-1} \sin k\theta \right] +
2k^{-1}\xi_x(0) \sin k\theta \\
\\
\xi_x(\theta)=\xi_x(0)\left[\cos k\theta + k^{-1} \sin k\theta \right] - 4 k^{-1} p(0) \sin k \theta
\end{array}
\right. .\label{T2}
\end{equation}
Hence, $s(\theta)$ is given explicitly by the equations
(\ref{T1},\ref{T2}). For
$$
\theta=k^{-1}\arccos \tfrac{3}{4}=k^{-1} \arcsin \tfrac{k}{4}
$$
these equations yield the transformation $S$
exactly.

Since $H$ is the Weyl symbol of $\hat H$ (given by (\ref{OpH})),
the symplectomorphisms $s(\theta)$ are the projections onto
Sp$(4n)$ of the metaplectic operators $\hat U(\theta)=e^{-i\theta
\hat H}$. So, in particular, $S=s(\theta_0)$ is the projection of
$\hat S=\hat U(\theta_0)$.
\end{proof}

\begin{remark}
The family of transformations $s(\theta)$ is periodic. We have $s(\theta + \frac{2\pi}{k})=s(\theta)$ for all
$\theta \in \RE$. The orbits generated by $s$ belong to the following level surfaces
\begin{equation}
\left\{ \begin{array}{l}
2x^2+\xi_p^2-x\xi_p=Const\\
2p^2+\xi_x^2-p\xi_x=Const'
\end{array} \right.
\end{equation}
and consist of a rotation along two ellipses in the $x,\xi_p$ and $p,\xi_x$ planes of the phase space.
\end{remark}

\subsection{The main result}

Let $\hat T(\theta)$ be the one-parameter (periodic) group of unitary transformations defined by
\begin{equation}
\hat T(\theta): L^2(\RE^{2n})\longrightarrow L^2(\RE^{2n}); \, \Phi(x,\xi_p) \longmapsto \hat T(\theta)
\Phi(x,\xi_p)= \Phi(x(-\theta),\xi_p(-\theta)) \label{TT}
\end{equation}
where $x(-\theta)$ and $\xi_p(-\theta)$ are given by (\ref{T1}). Let us also define $\hat T=\hat T(\theta_0)$.

We start by proving the following

\begin{lemma}
The one-parameter unitary evolution group $\hat U(\theta)=e^{-i\theta \hat H}$ is given explicitly by
$$
\hat U(\theta): L^2(\RE^{2n}) \longrightarrow L^2(\RE^{2n})
$$
\begin{equation}
\hat U(\theta) \Psi(x,p) =\F_{\xi_p\rightarrow p}^{-1} \hat T(\theta) \F_{p\rightarrow \xi_p}\left[\Psi(x,
p)\right] \label{U}
\end{equation}

\end{lemma}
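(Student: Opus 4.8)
The plan is to verify formula (\ref{U}) by showing that the right-hand side defines a strongly continuous one-parameter unitary group whose infinitesimal generator is $-i\hat H$, and then invoke Stone's theorem together with uniqueness of the solution of the abstract Schr\"odinger equation. First I would observe that $\hat T(\theta)$ is unitary on $L^2(\RE^{2n})$ because it is the composition operator induced by the map $(x,\xi_p)\mapsto(x(-\theta),\xi_p(-\theta))$, which by (\ref{T1}) is a \emph{linear} symplectic (hence volume-preserving) transformation of the $(x,\xi_p)$-plane; its inverse is $\hat T(-\theta)$, and $\hat T(\theta+\theta')=\hat T(\theta)\hat T(\theta')$ follows from the group law of $s$ restricted to that plane (Theorem \ref{CSym} and the Remark). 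Since $\F_{p\to\xi_p}$ and $\F_{\xi_p\to p}^{-1}$ are mutually inverse unitaries on $L^2(\RE^{2n})$, the conjugated family
$$
\hat V(\theta):=\F_{\xi_p\to p}^{-1}\,\hat T(\theta)\,\F_{p\to\xi_p}
$$
is again a strongly continuous one-parameter unitary group.

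Next I would compute its generator. Differentiating $\hat T(\theta)\Phi$ at $\theta=0$ using the Hamilton equations for $(\dot x,\dot\xi_p)$ — namely $\dot x=2\xi_p-x$, $\dot\xi_p=\xi_p-4x$ — gives, on a suitable dense domain (say $\S(\RE^{2n})$),
$$
\left.\frac{d}{d\theta}\right|_{0}\hat T(\theta)\Phi(x,\xi_p)
= -(2\xi_p-x)\,\partial_x\Phi-(\xi_p-4x)\,\partial_{\xi_p}\Phi .
$$
Conjugating by the partial Fourier transform turns multiplication by $\xi_p$ into $-i\partial_p$ and $-i\partial_{\xi_p}$ into multiplication by $p$, i.e. $\xi_p\rightsquigarrow\hat\xi_p$, $\partial_{\xi_p}\rightsquigarrow i\hat p$ in the notation (\ref{SchRep}), while $x$ and $\partial_x$ are untouched. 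Carrying this out and symmetrizing the resulting first-order operator (the symmetrization producing exactly the $(\;)_+$ terms) I expect to recover precisely $-i\hat H$ with $\hat H$ as in (\ref{OpH}); this is the one genuine computation and also the main obstacle, since one must be careful with operator ordering — the classical symbol of $\hat V'(0)$ must be matched to (\ref{CH}), and the Weyl quantization of that symbol is (\ref{OpH}), so the ordering ambiguities are resolved by insisting on the Weyl (symmetrized) prescription, consistently with the last paragraph of the proof of Theorem \ref{CSym}.

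Having identified the generator, I would finish as follows: both $\hat U(\theta)=e^{-i\theta\hat H}$ (defined by the functional calculus, using that $\hat H$ is self-adjoint as a quadratic Weyl operator) and $\hat V(\theta)$ are strongly continuous unitary groups on $L^2(\RE^{2n})$ with the same generator $-i\hat H$ on the essentially-self-adjointness core $\S(\RE^{2n})$ (or $D(\hat H)$). By Stone's theorem the group is uniquely determined by its generator, hence $\hat U(\theta)=\hat V(\theta)$ for all $\theta$, which is exactly (\ref{U}). A minor point to address is essential self-adjointness of $\hat H$ on the chosen core and the fact that $\S(\RE^{2n})$ is invariant under $\hat V(\theta)$ (clear, since $\hat T(\theta)$ is a linear change of variables and the partial Fourier transform preserves $\S$); with these in hand the coincidence of the two groups is immediate.
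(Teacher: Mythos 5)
Your strategy is sound and is in substance the paper's own argument run in the opposite direction: the paper conjugates $\hat H$ by $\F_{p\to\xi_p}$ to obtain the first-order transport operator $-2i\xi_p\cdot\partial_x+ix\cdot\partial_x-i\xi_p\cdot\partial_{\xi_p}+4ix\cdot\partial_{\xi_p}$, solves the resulting Schr\"odinger equation by characteristics (which produces exactly $\hat T(\theta)$), and concludes by uniqueness and density of $\S(\RE^{2n})$; you instead start from $\hat V(\theta)=\F_{\xi_p\to p}^{-1}\hat T(\theta)\F_{p\to\xi_p}$, compute its generator, and conclude by Stone's theorem. These are equivalent uniqueness arguments, and your unitarity check for $\hat T(\theta)$ is correct (the vector field $(2\xi_p-x,\,\xi_p-4x)$ is divergence-free, so the linear flow has determinant one).

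Two cautions about what you rightly single out as the one genuine computation. First, with the paper's convention $\hat\Psi(x,\xi_p)=(2\pi)^{-n/2}\int e^{-i\xi_p\cdot p}\Psi(x,p)\,dp$ one indeed has $\F_{\xi_p\to p}^{-1}\,\xi_p\,\F_{p\to\xi_p}=-i\partial_p=\hat\xi_p$, but $\F_{\xi_p\to p}^{-1}\,\partial_{\xi_p}\,\F_{p\to\xi_p}=-ip\,\cdot=-i\hat p$, not $+i\hat p$ as you state; with your sign the images of $-\xi_p\cdot\partial_{\xi_p}$ and $4x\cdot\partial_{\xi_p}$ come out wrong and the generator fails to match $-i\hat H$. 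With the correct sign the identity closes: $\F_{\xi_p\to p}^{-1}L\,\F_{p\to\xi_p}=2i\partial_x\cdot\partial_p+x\cdot\partial_x+p\cdot\partial_p+n-4ix\cdot p=-i\hat H$, where the zeroth-order constant $n$ is precisely what the symmetrized terms $-(\hat x\cdot\hat\xi_x)_+-(\hat p\cdot\hat\xi_p)_+$ contribute. Second, there is no ordering ambiguity to be ``resolved by the Weyl prescription'' at this stage: $\hat H$ is already given explicitly by (\ref{OpH}), so the task is to verify an exact operator identity (tracking the constants just mentioned), not to match classical symbols. With the sign corrected, your proposal is complete.
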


\begin{proof}

Since $\hat H$ is self-adjoint and $\S(\RE^{2n}) \subset D(\hat
H)$, the initial value problem
\begin{equation}
i \frac{\partial \Psi}{\partial \theta}= \hat H \Psi \quad, \quad
\Psi(\cdot,0)=\Psi_0(\cdot) \in \S(\RE^{2n}) \label{Scheq}
\end{equation}
has the unique solution
$$
\Psi(x,p,\theta) = e^{-i\theta \hat H} \Psi_0(x,p)
$$
where $e^{-i\theta \hat H}:L^2(\RE^{2n})\longrightarrow L^2(\RE^{2n})$ is the strongly continuous unitary
evolution group generated by $\hat H$ (see, for instance [Chapter 5,\cite{Oliveira}]). Our task is then to
determine the solution of (\ref{Scheq}) explicitly.

The Hamiltonian operator $\hat H$ (\ref{OpH}) can be re-written as
\begin{equation}
\hat H= \F_{\xi_p\rightarrow p}^{-1} \left[ -2i \xi_p \cdot
\partial_x +i x \cdot \partial_x-i \xi_p
\cdot\partial_{\xi_p}+4ix\cdot
\partial_{\xi_p} \right]\F_{p\rightarrow \xi_p}.
\label{FTOpH}
\end{equation}
where the symmetric terms of (\ref{OpH}) were calculated explicitly. Defining
$\hat\Psi(x,\xi_p,\theta)=\F_{p\rightarrow \xi_p}[\Psi(x,p,\theta)]$, we obtain from eq.(\ref{Scheq})
\begin{equation}
i \frac{\partial \hat\Psi}{\partial \theta}= \left[ -2i \xi_p\cdot
\partial_x +i x \cdot\partial_x-i \xi_p
\cdot\partial_{\xi_p}+4ix\cdot
\partial_{\xi_p}
\right]\hat\Psi \label{Scheq2}
\end{equation}
$$
\hat\Psi(\cdot,0)=\hat\Psi_0(\cdot) \in \S(\RE^{2n}).
$$
The solution of this initial value problem is easily found to be:
$$
\hat\Psi(x,\xi_p,\theta)=\hat\Psi_0(x(-\theta),\xi_p(-\theta))=\hat T(\theta) \hat\Psi_0(x,\xi_p).
$$
Moreover, if $\hat\Psi_0 \in \S(\RE^{2n})$ then $\hat\Psi(x,\xi_p,\theta) \in \S(\RE^{2n})$ for all $\theta$.

Consequently, the solution of eq.(\ref{Scheq}) is
$$
\Psi(x,p,\theta) =\F_{\xi_p\rightarrow p}^{-1} \left[\hat\Psi(x(-\theta),\xi_p(-\theta)) \right]
=\F_{\xi_p\rightarrow p}^{-1} \hat T(\theta) \F_{p \rightarrow \xi_p} \left[\Psi_0(x,p)\right].
$$
Let us then define
$$
\hat U(\theta):L^2(\RE^{2n}) \to L^2(\RE^{2n});\quad \hat U(\theta) \Psi_0(x,p) :=\F_{\xi_p\rightarrow p}^{-1}
\hat T(\theta) \F_{p \rightarrow \xi_p} \left[\Psi_0(x,p)\right].
$$
We conclude that $\hat U(\theta)$ is a linear and unitary (so continuous) operator and satisfies
$$
\left.\hat U(\theta)\right|_{\S(\RE^{2n})}=\left.e^{-i\theta \hat
H}\right|_{\S(\RE^{2n})}\, .
$$
Since both operators are continuous (and $\S(\RE^{2n})$ is dense
in $L^2(\RE^{2n})$) we also have in $L^2(\RE^{2n})$,
$$
\hat U(\theta)=e^{-i\theta \hat H}
$$
which completes the proof.
\end{proof}

It follows that

\begin{theorem}
Let $\hat S=\hat U(\theta_0)$ and $\theta_0=\frac{\sqrt{7}}{7}\arccos \left(\frac{3}{4}\right)$. For $\psi,\phi
 \in L^2(\RE^{n})$ we have
\begin{equation}
W(\psi,\phi)=\frac{1}{(2\pi)^{n/2}}\hat S \, (\psi \otimes \overline{\hat\phi}). \label{MWF}
\end{equation}
\end{theorem}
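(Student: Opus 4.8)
The plan is to substitute the explicit formula for $\hat S=\hat U(\theta_0)$ furnished by the preceding Lemma into the right-hand side of (\ref{MWF}) and to reduce it, through a short chain of elementary Fourier manipulations, to the defining integral (\ref{WT}) of the cross Wigner function. Since $\hat S$ and the two partial Fourier transforms are unitary on $L^2(\RE^{2n})$, and since both $(\psi,\phi)\mapsto W(\psi,\phi)$ (by the Moyal identity) and $(\psi,\phi)\mapsto \psi\otimes\overline{\hat\phi}$ are continuous bilinear maps $L^2(\RE^n)\times L^2(\RE^n)\to L^2(\RE^{2n})$, it is enough to prove the identity for $\psi,\phi\in\S(\RE^n)$ and then extend by density; on Schwartz functions all the pointwise integral computations below are legitimate.

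First I would compute $\F_{p\rightarrow\xi_p}[\psi\otimes\overline{\hat\phi}]$. Since $x$ is a spectator variable, this equals $\psi(x)\cdot\F_{p\rightarrow\xi_p}[\overline{\hat\phi}](\xi_p)$, and pulling the complex conjugation outside the integral together with $\F^{-1}\hat\phi=\phi$ gives $\F_{p\rightarrow\xi_p}[\overline{\hat\phi}](\xi_p)=\overline{\phi(\xi_p)}$, so that $\F_{p\rightarrow\xi_p}[\psi\otimes\overline{\hat\phi}]=\psi\otimes\overline{\phi}$ (now as a function of $(x,\xi_p)$). Next I apply $\hat T(\theta_0)$: by (\ref{TT}) this replaces $(x,\xi_p)$ by $(x(-\theta_0),\xi_p(-\theta_0))$, i.e.\ by the image of $(x,\xi_p)$ under $s(-\theta_0)=s(\theta_0)^{-1}=S^{-1}$ restricted to the (decoupled) $x,\xi_p$-plane. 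From the explicit form of $S$ in Theorem \ref{CSym} this restriction is the linear map $(x,\xi_p)\mapsto(x/2+\xi_p/2,\,-x+\xi_p)$; inverting it gives $x(-\theta_0)=x-\xi_p/2$ and $\xi_p(-\theta_0)=x+\xi_p/2$, whence $\hat T(\theta_0)(\psi\otimes\overline{\phi})(x,\xi_p)=\psi(x-\xi_p/2)\,\overline{\phi}(x+\xi_p/2)$.

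Finally I apply $\F^{-1}_{\xi_p\rightarrow p}$ and, writing out the inverse partial Fourier transform, obtain $\hat S(\psi\otimes\overline{\hat\phi})(x,p)=\frac{1}{(2\pi)^{n/2}}\int e^{i p\cdot\xi_p}\psi(x-\tfrac{\xi_p}{2})\overline{\phi}(x+\tfrac{\xi_p}{2})\,d\xi_p$, which by (\ref{WT}) is precisely $(2\pi)^{n/2}W(\psi,\phi)(x,p)$; this is (\ref{MWF}). The computation is essentially mechanical once the Lemma is available; the only genuinely error-prone point is the orientation of the classical flow, namely that $\hat T(\theta_0)$ is composition with $s(-\theta_0)=S^{-1}$ and \emph{not} with $S$, so one must invert the $x,\xi_p$-block of (\ref{ST}) before substituting (the rest being just bookkeeping of the Fourier normalisation and of the conjugation). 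As a cross-check one can instead run the argument backwards, applying $\hat S^{-1}=\hat U(-\theta_0)$ to $W(\psi,\phi)$ — whose partial Fourier transform in $p$ is read off immediately from (\ref{WT}) — and verifying that $\hat T(-\theta_0)$, which now uses the $x,\xi_p$-block of $S$ directly, collapses the two arguments to $(x,\xi_p)$ and returns $\psi\otimes\overline{\hat\phi}$ up to the factor $(2\pi)^{n/2}$.
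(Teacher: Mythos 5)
Your proposal is correct and follows essentially the same route as the paper's own proof: compute $\F_{p\to\xi_p}[\psi\otimes\overline{\hat\phi}]=\psi\otimes\overline{\phi}$, apply $\hat T(\theta_0)$ (correctly identified as composition with $s(-\theta_0)$, giving $\psi(x-\xi_p/2)\overline{\phi}(x+\xi_p/2)$), and read off the cross Wigner integral from $\F^{-1}_{\xi_p\to p}$. The only differences — obtaining $s(-\theta_0)$ by inverting the $(x,\xi_p)$-block of $S$ rather than substituting $-\theta_0$ into the flow, and the explicit density argument — are cosmetic.
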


\begin{proof}
We notice that
$$
\overline{\hat\phi}(p)=\overline{\F_{y \rightarrow p}[\phi(y)]}=\F^{-1}_{y \rightarrow p}[\overline{\phi}(y)]
$$
and so
$$
\psi\otimes \overline{\hat\phi}(x,p)=\F^{-1}_{y \rightarrow p}[\psi\otimes\overline{\phi}(x,y)].
$$
We also notice from (\ref{T1}) and (\ref{TT}) that
$$
\hat T\Phi(x,\xi_p)=\Phi(x(-\theta_0),\xi_p(-\theta_0))=\Phi(x-\xi_p/2,x+\xi_p/2).
$$
Hence, we get
$$
\hat S (\psi \otimes \overline{\hat\phi})(x,p)= \F_{\xi_p\rightarrow p}^{-1} \hat T \F_{p\rightarrow \xi_p}
\F_{y\rightarrow p}^{-1} [\psi\otimes\overline{\phi}(x,y)]= \F_{\xi_p\rightarrow p}^{-1} \left[\hat T \psi
\otimes \overline{\phi}(x,\xi_p)\right]
$$
$$
=\frac{1}{(2 \pi)^{n/2}} \int_{\RE^n}  e^{ip\cdot\xi_p} \psi(x-\xi_p/2) \overline{\phi}(x+\xi_p/2) d\xi_p
$$
which is precisely $(2\pi)^{n/2} W(\psi,\phi)$.
\end{proof}

\subsection{Metaplectic formulation of Weyl symbols}

Let $\L(\S(\RE^n),\S'(\RE^n))$ be the space of linear and continuous operators of the form $\S(\RE^n)
\longrightarrow \S'(\RE^{n})$. The Schwartz kernel theorem states that all operators $\hat a \in
\L(\S(\RE^n),\S'(\RE^n))$ admit a kernel representation
\begin{equation}
\hat a \psi (x) = \langle K_a(x,\cdot),\psi(\cdot)\rangle \label{SchwartzKernel}
\end{equation}
where $K_a \in \S'(\RE^{n}\times \RE^n)$. The Weyl symbol of $\hat a$ is defined by
\begin{equation}
a(x,p)= \int_{\RE^n} e^{-ip \cdot y} K_a(x+\tfrac{1}{2}y,x-\tfrac{1}{2}y)\, dy \label{WeylSymbol}
\end{equation}
where the Fourier transform is taken in the distributional sense. We will denote by $\hat a
\overset{\text{Weyl}}{\longleftrightarrow} a$ the one-to-one association between pseudo-differential operators
(i.e. linear and continuous operators from $\S(\RE^n)$ to $\S'(\RE^n)$) and Weyl symbols.

From (\ref{SchwartzKernel}) we immediately obtain for all $\phi \in \S(\RE^n)$
\begin{equation}
\langle \hat a \psi,\overline{\phi} \rangle=\langle K_a(x,y), \psi \otimes \overline\phi(y,x) \rangle \label{SK}
\end{equation}
and it is well known that this equation can be re-written in terms of the cross Wigner function and the Weyl
symbol of $\hat a$, yielding the relation \cite{Gosson11}
\begin{equation}
\langle \hat a \psi,\overline{\phi} \rangle=\langle a, W(\psi,\phi) \rangle \, . \label{WK}
\end{equation}

Since the Wigner function $W(\psi,\phi)$ is (proportional to) the
Weyl symbol of the operator with kernel $\psi \otimes
\overline\phi$, we may suspect that a metaplectic formulation is
also possible for generic Weyl symbols. This result is the content
of the next theorem.

\begin{theorem}
Let $\hat a:\S(\RE^n) \longrightarrow \S'(\RE^n)$ be a generic pseudo-differential operator, and let $K_a\in
\S'(\RE^n\times\RE^n)$ be the kernel of $\hat a$. Then the Weyl symbol $a$ of $\hat a$ is given by
\begin{equation}
a(x,p)=(2\pi)^{n/2} \hat S \F_{y\to p}^{-1}\left[K_a(x,y)\right] \label{WeylMetaplectic}
\end{equation}
where $\hat S=\hat U(\theta_0)$ is given by (\ref{U}).
\end{theorem}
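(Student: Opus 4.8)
The plan is to deduce (\ref{WeylMetaplectic}) from the metaplectic formula (\ref{MWF}) for the cross Wigner function, by linearity and density. The starting observation is that (\ref{WeylMetaplectic}) for a rank-one kernel is essentially a restatement of (\ref{MWF}): if $K_a=\psi\otimes\overline\phi$ with $\psi,\phi\in\S(\RE^n)$, then the change of variables $y\mapsto-\xi_p$ in the defining integral (\ref{WeylSymbol}) gives $a(x,p)=(2\pi)^{n}W(\psi,\phi)(x,p)$; combining this with (\ref{MWF}) and the identity $\psi\otimes\overline{\hat\phi}=\F_{y\to p}^{-1}[\psi\otimes\overline\phi]$ established in the proof of (\ref{MWF}), we obtain
$$
a=(2\pi)^{n}W(\psi,\phi)=(2\pi)^{n/2}\hat S(\psi\otimes\overline{\hat\phi})=(2\pi)^{n/2}\hat S\,\F_{y\to p}^{-1}[K_a],
$$
which is exactly (\ref{WeylMetaplectic}) in this case.

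To remove the rank-one restriction I would use a continuity argument. Read as maps $K_a\mapsto a$, both sides of (\ref{WeylMetaplectic}) are linear: the left-hand side is the composition of the pull-back by the linear automorphism $(x,y)\mapsto(x+\tfrac12 y,\,x-\tfrac12 y)$ of $\RE^{2n}$ with a partial Fourier transform, while the right-hand side is the composition of $\F_{y\to p}^{-1}$ with $\hat S$. Each of these building blocks is continuous on the relevant space of tempered distributions for the weak-$*$ topology — the partial Fourier transforms and $\hat S$ are defined on $\S'$ precisely by extension by duality (and $\hat S$ is in any case metaplectic). Since the span of $\{\psi\otimes\overline\phi:\psi,\phi\in\S(\RE^n)\}$ coincides with the algebraic tensor product $\S(\RE^n)\otimes\S(\RE^n)$, it is sequentially dense in $\S'(\RE^n\times\RE^n)$ for the weak-$*$ topology, and the identity established above on rank-one kernels extends to every $K_a\in\S'(\RE^n\times\RE^n)$, which is the assertion of the theorem.

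Alternatively — and this sidesteps density altogether — one can verify (\ref{WeylMetaplectic}) by direct computation: unwinding $\hat S=\F_{\xi_p\to p}^{-1}\hat T\F_{p\to\xi_p}$ from the representation (\ref{U}), one checks that $\F_{p\to\xi_p}\F_{y\to p}^{-1}[K_a(x,y)]=K_a(x,\xi_p)$ (the two transforms merely relabel the second slot), that $\hat T$ sends this to $K_a(x-\tfrac{\xi_p}{2},\,x+\tfrac{\xi_p}{2})$ — using $\hat T\Phi(x,\xi_p)=\Phi(x-\tfrac{\xi_p}{2},x+\tfrac{\xi_p}{2})$, as in the proof of (\ref{MWF}) — and that applying $\F_{\xi_p\to p}^{-1}$, multiplying by $(2\pi)^{n/2}$ and substituting $\xi_p\mapsto-y$ reproduces the defining integral (\ref{WeylSymbol}). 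The only point that genuinely requires care — the closest thing here to an obstacle — is that $K_a$ is merely a tempered distribution, so each of these steps (the composition of the two Fourier transforms, the linear change of variables inside $\hat T$, the final partial Fourier transform) must be read in the distributional sense; this is legitimate because both the Weyl correspondence (\ref{WeylSymbol}) and the operator $\hat S$ (through (\ref{U})) are already defined by duality on $\S'$, but it should be stated explicitly.
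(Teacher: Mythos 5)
Both of your arguments are correct. Your second, ``alternative'' computation is in fact exactly the paper's own proof: the authors unwind $\hat S=\F_{\xi_p\to p}^{-1}\hat T\,\F_{p\to\xi_p}$, note that $\F_{p\to\xi_p}\F_{y\to p}^{-1}[K_a(x,y)]=K_a(x,\xi_p)$, apply $\hat T K_a(x,\xi_p)=K_a(x-\tfrac{\xi_p}{2},x+\tfrac{\xi_p}{2})$, and recognize the resulting integral (after $\xi_p\mapsto -y$) as $(2\pi)^{-n/2}a(x,p)$ from (\ref{WeylSymbol}), all read distributionally via the $\S'$-extension of $\hat S$. Your primary argument runs in the opposite logical direction: it takes the already-proved identity (\ref{MWF}) for the cross Wigner function as the rank-one case $K_a=\psi\otimes\overline{\phi}$ (using $a=(2\pi)^nW(\psi,\phi)$ and $\psi\otimes\overline{\hat\phi}=\F_{y\to p}^{-1}[\psi\otimes\overline\phi]$) and then extends to general $K_a\in\S'(\RE^n\times\RE^n)$ by noting that both sides of (\ref{WeylMetaplectic}) are weak-$*$ continuous linear maps on $\S'$ agreeing on the dense subspace $\S(\RE^n)\otimes\S(\RE^n)$. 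That route is legitimate and makes explicit the remark preceding the theorem in the paper (that the Wigner function is the Weyl symbol of the rank-one operator), at the cost of invoking a density-and-continuity argument that the direct computation avoids; the direct computation, conversely, is self-contained and reproves (\ref{MWF}) as a special case rather than relying on it. Your closing caveat about reading every step in the distributional sense is exactly the point the paper handles by observing that $\hat S$ extends trivially to $\S'(\RE^{2n})$.
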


\begin{proof}
We note that for arbitrary $K_a\in \S'(\RE^n \times \RE^n)$ we have
$$
\F_{p\to \xi_p} \F_{y\to p}^{-1}\left[K_a(x,y) \right]=K_a(x,\xi_p).
$$
We also note from (\ref{U}) that $\hat S$ can be trivially extended to $\S'(\RE^{2n})$. Hence, from (\ref{U})
\begin{align*}
\hat S \F_{y\to p}^{-1}\left[K_a(x,y) \right] &= \F_{\xi_p\to p}^{-1}\left[\hat T K_a(x,\xi_p) \right]
=\F_{\xi_p\to p}^{-1}\left[ K_a(x-\tfrac{\xi_p}{2},x+\tfrac{\xi_p}{2}) \right]\\
&=\left(\frac{1}{2\pi}\right)^{n/2} \int_{\RE^n} e^{-ip\cdot \xi_p} K_a(x+\tfrac{\xi_p}{2},x-\tfrac{\xi_p}{2})\,
d\xi_p \\
&= \left(\frac{1}{2\pi}\right)^{n/2} a(x,p)
\end{align*}
which concludes the proof.
\end{proof}

\section{A Fractional generalization of the Wigner function and Weyl symbols}

In this section we use the metaplectic formulation of the Wigner function and Weyl symbols to obtain a natural
fractional generalization for both these objects. We also extend the main properties of the Wigner transform and
Weyl symbols to the fractional case. These results end up yielding simple, alternative proofs for several known
results about the Weyl-Wigner formalism.

\subsection{Main definitions and regularity results}

A natural definition of the fractional cross Wigner function is
\begin{definition}
Let $\psi,\phi \in L^2(\RE^n)$ and let $\hat U(\theta)$ be the unitary operator given by (\ref{U}). The {\it
fractional cross Wigner function} $W^\theta(\psi,\phi)$ is defined by
\begin{equation}
W^\theta(\psi,\phi)(x,p):= (2\pi)^{-n/2} \hat U(\theta) \psi\otimes\overline{\hat \phi}(x,p) \label{FWF}
\end{equation}
for $\theta \in \RE$. \label{DefFWF}
\end{definition}

Since $\hat U(\theta)$ is periodic, the family of quasi-distributions $(W^\theta(\psi,\phi))_{\theta \in \RE}$
is also periodic. We have $W^\theta(\psi,\phi)=W^{\theta+ \frac{2\pi}{\sqrt{7}}}(\psi,\phi)$ for all $\theta \in
\RE$. Some important elements of this family are $W^0(\psi,\psi)= (2\pi)^{-n/2} \psi \otimes
\overline{\hat\psi}$, which is $e^{ip\cdot x}$ times the anti-standard or Kirkwood quasi-distribution
\cite{Lee}. We also have $W^{2 \theta_0}(\psi,\psi)= (2\pi)^{-n/2} \overline{\psi} \otimes {\hat\psi}$ (check
eq.(\ref{AdjWigner})) which is $e^{-ip\cdot x}$ times the standard-ordered quasi-distribution \cite{Lee}.
Finally, $W^{\theta_0}(\psi,\phi)=W(\psi,\phi)$ is the cross Wigner function.

The metaplectic transformation $\hat U(\theta) \in$ Mp$(4n)$ is a unitary operator $L^2(\RE^{2n})
\longrightarrow L^2(\RE^{2n})$ and a continuous mapping $\S(\RE^{2n}) \longrightarrow \S(\RE^{2n})$ that extends
by duality to a continuous mapping $\S'(\RE^{2n}) \longrightarrow \S'(\RE^{2n})$. Hence,

\begin{theorem}
The sesquilinear map $W^\theta$ is a continuous mapping of the
forms:
$$
W^\theta:\S(\RE^n) \times \S(\RE^n) \longrightarrow \S(\RE^{2n})
$$
$$
W^\theta:L^2(\RE^n) \times L^2(\RE^n) \longrightarrow L^2(\RE^{2n})
$$
and extends to a continuous mapping of the form
$$
W^\theta:\S'(\RE^n) \times \S'(\RE^n) \longrightarrow \S'(\RE^{2n}) \, .
$$
\end{theorem}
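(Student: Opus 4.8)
The plan is to write $W^\theta$ as a composition of elementary maps whose continuity on each of the three scales $\S$, $L^2$, $\S'$ is standard, and then feed in the properties of $\hat U(\theta)$ recorded in the paragraph just before the statement. Concretely, factor
$$
(\psi,\phi)\ \longmapsto\ (\psi,\hat\phi)\ \longmapsto\ (\psi,\overline{\hat\phi})\ \longmapsto\ \psi\otimes\overline{\hat\phi}\ \longmapsto\ (2\pi)^{-n/2}\,\hat U(\theta)\bigl(\psi\otimes\overline{\hat\phi}\bigr),
$$
i.e.\ the Fourier transform in the second slot, then complex conjugation, then the bilinear tensor product $\S(\RE^n)\times\S(\RE^n)\to\S(\RE^{2n})$ (resp.\ on $L^2$, resp.\ on $\S'$), then the fixed metaplectic operator on $\RE^{2n}$. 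Continuity of $W^\theta$ on a given scale then reduces to continuity of each of the four factors on that scale.

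For the $\S$ and $L^2$ cases: the Fourier transform is a topological automorphism of $\S(\RE^n)$ and a unitary of $L^2(\RE^n)$; conjugation is an anti-linear homeomorphism of $\S(\RE^n)$ and an isometry of $L^2(\RE^n)$; the tensor product is jointly continuous $\S(\RE^n)\times\S(\RE^n)\to\S(\RE^{2n})$ because, for multi-indices, the seminorm $\sup_{x,y}|x^\alpha y^\beta\partial_x^\gamma\partial_y^\delta(\psi(x)g(y))|$ factors as a product $\bigl(\sup_x|x^\alpha\partial_x^\gamma\psi|\bigr)\bigl(\sup_y|y^\beta\partial_y^\delta g|\bigr)$ of Schwartz seminorms, while on $L^2$ Fubini gives $\|\psi\otimes g\|_{L^2(\RE^{2n})}=\|\psi\|_{L^2(\RE^n)}\|g\|_{L^2(\RE^n)}$; finally $\hat U(\theta)$ is, as already stated, a continuous endomorphism of $\S(\RE^{2n})$ and a unitary of $L^2(\RE^{2n})$. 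Composing and using Plancherel $\|\hat\phi\|_{L^2}=\|\phi\|_{L^2}$ yields in particular the Moyal-type identity $\|W^\theta(\psi,\phi)\|_{L^2(\RE^{2n})}=(2\pi)^{-n/2}\|\psi\|_{L^2(\RE^n)}\|\phi\|_{L^2(\RE^n)}$, which is the desired (bounded, hence continuous) sesquilinearity on $L^2\times L^2$.

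The $\S'$ case is the only delicate point, and I expect it to be the main obstacle. By duality the Fourier transform and conjugation are homeomorphisms of $\S'(\RE^n)$, and $\hat U(\theta)$ extends (as noted above) to a homeomorphism of $\S'(\RE^{2n})$, so the sole substantive ingredient is the extension of the tensor product to $\S'(\RE^n)\times\S'(\RE^n)\to\S'(\RE^{2n})$. This is exactly the nuclearity of $\S$, equivalently the Schwartz kernel theorem already invoked in Section 2.3: $\S'(\RE^{2n})$ is the completed tensor product of two copies of $\S'(\RE^n)$ and the canonical bilinear map $(S,T)\mapsto S\otimes T$ is separately continuous (indeed hypocontinuous with respect to the bounded sets). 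Hence the correct reading of ``continuous sesquilinear map $\S'(\RE^n)\times\S'(\RE^n)\to\S'(\RE^{2n})$'' is separate continuity (equivalently hypocontinuity), since $\S'$ is not metrizable and a non-trivial bilinear map on it need not be jointly continuous; making this explicit, and recognizing that it is nothing more than the kernel theorem in disguise, is the whole difficulty. Granting it, $W^\theta$ on $\S'\times\S'$ is a composition of a separately continuous bilinear map with homeomorphisms, so it is separately continuous, and it agrees on the dense subspaces $\S\times\S$ and $L^2\times L^2$ with the maps built above by continuity of the Fourier transform, conjugation and $\hat U(\theta)$ on those scales.

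As a variant that avoids the abstract tensor-product machinery, one may argue by testing: for $\Psi\in\S(\RE^{2n})$, write $\langle W^\theta(\psi,\phi),\Psi\rangle=(2\pi)^{-n/2}\langle\psi\otimes\overline{\hat\phi},\hat U(\theta)^{*}\Psi\rangle$ and note $\hat U(\theta)^{*}\Psi=\hat U(-\theta)\Psi\in\S(\RE^{2n})$, reducing to continuity of the trilinear pairing $(S,T,\Theta)\mapsto\langle S\otimes T,\Theta\rangle$ on $\S'(\RE^n)\times\S'(\RE^n)\times\S(\RE^{2n})$ — once more the kernel theorem. I would present the structural composition argument as the main line and mention this second route only as a remark.
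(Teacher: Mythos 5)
Your proposal is correct and follows essentially the same route as the paper: the paper also factors $W^\theta=(2\pi)^{-n/2}\hat U(\theta)\circ M$ with $M(\psi,\phi)=\psi\otimes\overline{\hat\phi}$, invokes continuity of the Fourier transform on each of the three scales, and then composes with $\hat U(\theta)$. Your write-up is simply more explicit, in particular about the meaning of continuity of the tensor product on $\S'\times\S'$ (separate continuity via the kernel theorem), a point the paper passes over without comment.
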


\begin{proof}
Let $M(\psi,\phi)=\psi \otimes \overline{\hat\phi}$. Since the
Fourier transform $\F_{y \rightarrow p}$ is a continuous mapping
$\S(\RE^n) \longrightarrow \S(\RE^n)$, $L^2(\RE^n) \longrightarrow
L^2(\RE^n)$ and extends by duality onto a continuous mapping
$\S'(\RE^n) \longrightarrow \S'(\RE^n)$, the sesquilinear map $M$
can also be realized as a continuous mapping of any of the forms
$\S(\RE^n) \times \S(\RE^n) \longrightarrow \S(\RE^{2n})$,
$L^2(\RE^n)\times L^2(\RE^n) \longrightarrow L^2(\RE^{2n})$ and
also $\S'(\RE^n) \times \S'(\RE^n) \longrightarrow \S'(\RE^{2n})$.
The result follows from $W^\theta=(2\pi)^{-n/2}\hat U(\theta)M$.
\end{proof}

We also have for all $\theta \in \RE$
$$
(W^\theta(\psi_1,\phi_1),W^\theta(\psi_2,\phi_2))_{L^2(\RE^{2n})}=\frac{1}{(2\pi)^{n}}(\psi_1,\psi_2)_{L^2(\RE^{n})}
(\phi_1,\phi_2)_{L^2(\RE^{n})}
$$
which is the fractional generalization of the Moyal identity.
Hence, for $\phi_1=\phi_2=\phi$ such that
$||{\phi}||_{L^2(\RE^{n})}=1$, the {\it fractional windowed Wigner
transform} $W^\theta_\phi \cdot=(2\pi)^{n/2} W^\theta(\cdot,\phi)$
is a non-surjective isometry $L^2(\RE^n) \longrightarrow
L^2(\RE^{2n})$.

Just as in the case of the cross Wigner function, we may define a
fractional generalization of Weyl symbols.

\begin{definition}
Let $\hat a:\S(\RE^n) \longrightarrow \S'(\RE^n)$ be an arbitrary
pseudo-differential operator and let $K_a(x,y) \in \S'(\RE^n
\times \RE^n)$ be its kernel. The $\theta$-{\it Weyl symbol} of
$\hat a$ is defined by
\begin{equation}
a^\theta(x,p):= (2\pi)^{n/2} \hat U(\theta) \F_{y\to p}^{-1} \left[K_a(x,y) \right]. \label{FWS}
\end{equation}
\label{DefFWS}
\end{definition}

As an example, let $a^{\theta}_{\hat\rho}$ be the $\theta$-Weyl
symbol of the rank one operator $\hat\rho$ defined for fixed
$\psi,\phi \in L^2(\RE^n)$ by
$$
\hat\rho:\S(\RE^n) \longrightarrow \S'(\RE^n),\, \xi \longmapsto \hat\rho \xi=(\xi,\phi)_{L^2(\RE^n)} \psi.
$$
We have, as in the standard Weyl-Wigner formalism, $a^{\theta}_{\hat\rho}=(2\pi)^{n}W^\theta(\psi,\phi)$.

\subsection{Inversion and Kernel formulas}
Let, as before,
\begin{equation}
W_\phi^\theta : L^2(\RE^n)\longrightarrow L^2(\RE^{2n}); \, \phi \longmapsto W_\phi^\theta \psi=(2\pi)^{n/2}
W^\theta(\psi,\phi) \label{TWT}
\end{equation}
be the $\theta$-windowed Wigner transform defined for $\theta \in \RE$ and some window $\phi \in \S(\RE^n)$ such
that $||\phi||=1$. For $\theta=\theta_0$ we write $W_\phi \equiv W_\phi^{\theta_0}$ as in eq.(\ref{WT2}).  Let
also $(W^\theta_\phi)^{\ast }:L^2(\RE^{2n})\longrightarrow L^2(\RE^n)$ be the mapping
\begin{equation}
(W^\theta_\phi)^{\ast }\Psi(\cdot):= \int_{\RE^n} \, \hat\phi(p)
\hat U^{-1}(\theta) \Psi(\cdot,p) \, dp \, . \label{FIWF}
\end{equation}
Then

\begin{theorem}
The mapping $(W^\theta_\phi)^{\ast }$ is the adjoint of
$W^\theta_\phi$ and satisfies:

(i) $(W^\theta_\phi)^{\ast } W_\phi^\theta=1$;

(ii) $W^\theta_\phi (W^\theta_\phi)^{\ast } =P^\theta_\phi$ where
$P^\theta_\phi$ is the orthogonal projection onto the range of
$W^\theta_\phi$;

(iii) $(W^\theta_\phi)^{\ast }$ extends to a continuous operator
$\S'(\RE^{2n}) \longrightarrow \S'(\RE^n)$ defined by
\begin{equation}
\langle (W^\theta_\phi)^{\ast } \Psi , \xi \rangle = \langle \Psi, \hat U^{-1}(\theta) \xi\otimes \hat\phi
\rangle,\quad \forall \xi \in \S(\RE^n). \label{FIWF2}
\end{equation}
\label{Theorem12}
\end{theorem}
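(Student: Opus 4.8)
The plan is to verify the four claims in sequence, starting with the fact that the operator defined in (\ref{FIWF}) really is the Hilbert space adjoint of $W^\theta_\phi$. Since $W^\theta_\phi\psi=\hat U(\theta)(\psi\otimes\overline{\hat\phi})$ by Definition \ref{DefFWF} together with (\ref{FWF}) and (\ref{TWT}), and $\hat U(\theta)$ is unitary with $\hat U(\theta)^{\ast}=\hat U(-\theta)$, for $\psi\in L^2(\RE^n)$ and $\Psi\in L^2(\RE^{2n})$ one has $(W^\theta_\phi\psi,\Psi)_{L^2(\RE^{2n})}=(\psi\otimes\overline{\hat\phi},\hat U(-\theta)\Psi)_{L^2(\RE^{2n})}$; I would then carry out the $p$-integration by Fubini -- legitimate since $\hat\phi\in\S(\RE^n)$, so Cauchy--Schwarz in $p$ bounds the inner integral pointwise in $x$ by $\|\hat\phi\|\,\|[\hat U(-\theta)\Psi](x,\cdot)\|$, which lies in $L^2(\RE^n_x)$ -- and recognise the result as $(\psi,(W^\theta_\phi)^{\ast}\Psi)_{L^2(\RE^n)}$ with $(W^\theta_\phi)^{\ast}$ exactly as in (\ref{FIWF}). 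For (i) the shortest route is a direct computation: $(W^\theta_\phi)^{\ast}W^\theta_\phi\psi(\cdot)=\int\hat\phi(p)\,\hat U^{-1}(\theta)\hat U(\theta)(\psi\otimes\overline{\hat\phi})(\cdot,p)\,dp=\psi(\cdot)\int|\hat\phi(p)|^2\,dp=\|\phi\|^2\psi=\psi$, using Plancherel and $\|\phi\|=1$; alternatively one invokes the fractional Moyal identity quoted just before the theorem, which already says that $W^\theta_\phi$ is an isometry.

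Claim (ii) is then the general Hilbert space fact that $VV^{\ast}$ is the orthogonal projection onto $\operatorname{ran}V$ whenever $V$ is an isometry (the range being automatically closed). Concretely, set $P^\theta_\phi=W^\theta_\phi(W^\theta_\phi)^{\ast}$; then $(P^\theta_\phi)^{\ast}=P^\theta_\phi$ and, using (i), $(P^\theta_\phi)^2=W^\theta_\phi\bigl((W^\theta_\phi)^{\ast}W^\theta_\phi\bigr)(W^\theta_\phi)^{\ast}=P^\theta_\phi$, so $P^\theta_\phi$ is an orthogonal projection; its range is contained in $\operatorname{ran}W^\theta_\phi$, and for $\Psi=W^\theta_\phi\psi$ we get $P^\theta_\phi\Psi=W^\theta_\phi(W^\theta_\phi)^{\ast}W^\theta_\phi\psi=W^\theta_\phi\psi=\Psi$, so the two ranges coincide; closedness of $\operatorname{ran}W^\theta_\phi$ follows from the isometry property and completeness of $L^2(\RE^n)$.

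For (iii) I would first check that the right-hand side of (\ref{FIWF2}) makes sense: $\xi\otimes\hat\phi\in\S(\RE^{2n})$ since both factors are Schwartz ($\phi\in\S(\RE^n)$), and $\hat U^{-1}(\theta)\in\mathrm{Mp}(4n)$ restricts to a topological automorphism of $\S(\RE^{2n})$, so $\hat U^{-1}(\theta)(\xi\otimes\hat\phi)\in\S(\RE^{2n})$ may be paired with an arbitrary $\Psi\in\S'(\RE^{2n})$; continuity of $\xi\mapsto\langle\Psi,\hat U^{-1}(\theta)(\xi\otimes\hat\phi)\rangle$ on $\S(\RE^n)$ is then immediate from continuity of $\xi\mapsto\xi\otimes\hat\phi$, of $\hat U^{-1}(\theta)$, and of $\Psi$, so (\ref{FIWF2}) indeed defines an element of $\S'(\RE^n)$ depending continuously on $\Psi$. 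The substantive step is to show that this prescription \emph{extends} the $L^2$ operator (\ref{FIWF}): for $\Psi\in L^2(\RE^{2n})$ I would rewrite the defining integral as a partial distributional pairing in the $p$-variable, $(W^\theta_\phi)^{\ast}\Psi(\cdot)=\bigl\langle[\hat U^{-1}(\theta)\Psi](\cdot,p),\hat\phi(p)\bigr\rangle_p$, apply the Fubini theorem for tempered distributions to obtain $\langle(W^\theta_\phi)^{\ast}\Psi,\xi\rangle=\langle\hat U^{-1}(\theta)\Psi,\xi\otimes\hat\phi\rangle$, and then transfer the metaplectic operator to the other side of the bilinear bracket to reach (\ref{FIWF2}). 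I expect precisely this last manoeuvre to be the main obstacle: transposing $\hat U(\theta)$ across the bilinear distributional bracket is not the same as transposing it across the sesquilinear $L^2$ inner product, so it must be done by hand from the explicit factorisation (\ref{U}) of $\hat U(\theta)$, using that the partial Fourier transform is symmetric for the bilinear pairing and that pre-composition with a volume-preserving linear map transposes to pre-composition with its inverse; carrying those changes of variables through correctly is where the real work sits, the remainder being bookkeeping with the mapping properties of $\hat U(\theta)$ already isolated in Sections 2.1--2.2.
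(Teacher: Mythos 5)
Your identification of the adjoint and your arguments for (i) and (ii) are correct and essentially identical to the paper's: the paper also unwinds $(\Psi,W^\theta_\phi\xi)=(\hat U^{-1}(\theta)\Psi,\xi\otimes\overline{\hat\phi})$ to read off (\ref{FIWF}), proves (i) by the same direct computation using $\|\phi\|_{L^2}=1$, and obtains (ii) from idempotency together with the action on the range; your extra remarks (self-adjointness of $P^\theta_\phi$, closedness of the range of an isometry) only make (ii) more complete than the printed version.

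The gap is in (iii), and you have located it precisely without closing it. Getting to $\langle(W^\theta_\phi)^{\ast}\Psi,\xi\rangle=\langle\hat U^{-1}(\theta)\Psi,\xi\otimes\hat\phi\rangle$ is indeed routine, but the remaining step requires $\hat U^{-1}(\theta)$ to be its own transpose for the \emph{bilinear} bracket, and the very tools you list show that it is not: transposing $\hat U(\theta)=\F^{-1}_{\xi_p\rightarrow p}\hat T(\theta)\F_{p\rightarrow\xi_p}$ reverses the order of the factors, so the bilinear transpose is $\F^{-1}_{\xi_p\rightarrow p}\bigl(P\hat T(-\theta)P\bigr)\F_{p\rightarrow\xi_p}$ with $P\Psi(x,\xi_p)=\Psi(x,-\xi_p)$, and $P\hat T(-\theta)P=\hat T(-\theta)$ only when the matrix of (\ref{T1}) commutes with $\mathrm{diag}(I,-I)$, i.e.\ only when $\sin(\sqrt{7}\,\theta)=0$. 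Concretely, for $\theta=\theta_0$ a change of variables in (\ref{FIWF}) gives $\langle(W_\phi)^{\ast}\Psi,\xi\rangle=\langle\Psi,\Theta\rangle$ with $\Theta(x,p)=(2\pi)^{-n/2}\int e^{-ip\cdot w}\xi(x-\tfrac{w}{2})\phi(x+\tfrac{w}{2})\,dw$, whereas $\hat S^{-1}(\xi\otimes\hat\phi)(x,p)=(2\pi)^{-n/2}\int e^{ip\cdot w}\xi(\tfrac{x+w}{2})\phi(x-w)\,dw$; already at $p=0$ with Gaussian $\xi=\phi$ these disagree. So the manoeuvre you defer as ``where the real work sits'' cannot be carried out as described. (For what it is worth, the paper's own proof of (iii) asserts $\langle\hat U^{-1}(\theta)\Psi,\xi\otimes\hat\phi\rangle=\langle\Psi,\hat U^{-1}(\theta)\xi\otimes\hat\phi\rangle$ without justification, so you have put your finger on the genuinely weak point of the printed argument; but a proposal that stops there has not proved (iii), and the correct duality formula must be obtained by the explicit change of variables rather than by transposing $\hat U^{-1}(\theta)$ across the bracket.)
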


\begin{proof}
The adjoint of $W^\theta_\phi$ is defined by
$$
((W^\theta_\phi)^{*} \Psi,\xi)=(\Psi,W^\theta_\phi\xi) , \quad \forall \xi \in L^2(\RE^n)
$$
and since
\begin{align*}
\left(\Psi,W^\theta_\phi\xi\right) &=\left(\Psi,\hat U(\theta)\xi \otimes \overline{\hat\phi}\right)=
\left(\hat U^{-1}(\theta)\Psi,\xi \otimes \overline{\hat\phi}\right)\\
&= \int_{\RE^n} \,\overline{\xi}(x) \left[\int_{\RE^n} \,\hat\phi(p) \hat U^{-1}(\theta) \Psi(x,p) \, dp \right]
dx \, ,
\end{align*}
we have
$$
(W^\theta_\phi)^{*} \Psi=\int_{\RE^n} \,\hat\phi(p) \hat
U^{-1}(\theta) \Psi(x,p) \, dp
$$
as claimed.

(i) Let now $\Psi=W^\theta_\phi \psi= \hat U(\theta) (\psi \otimes \overline{\hat\phi})$. Then
\begin{align*}
(W^\theta_\phi)^{\ast } \Psi=&\int_{\RE^n} \,\hat\phi(p) \hat
U^{-1}(\theta) \hat U(\theta) (\psi \otimes
\overline{\hat\phi})(x,p) \, dp\\
=&\int_{\RE^n} \,\hat\phi(p) \overline{\hat\phi}(p) \psi(x) \, dp=\psi(x).
\end{align*}

(ii) Conversely, let $\Psi \in L^2(\RE^{2n})$ and let
$P^\theta_\phi=W^\theta_\phi (W^\theta_\phi)^{\ast }$. Then, of
course
$$
P^\theta_\phi P^\theta_\phi=W^\theta_\phi (W^\theta_\phi)^{\ast
}W^\theta_\phi (W^\theta_\phi)^{\ast }=P^\theta_\phi
$$
and if $\Psi \in $ Ran $W^\theta_\phi$ then $\Psi=W^\theta_\phi\psi$ for some $\psi\in L^2(\RE^n)$ and
$$
P^\theta_\phi\Psi=W^\theta_\phi (W^\theta_\phi)^{\ast
}W^\theta_\phi \psi=W^\theta_\phi \psi=\Psi.
$$
Hence, the claim.

(iii) As a distribution $(W^\theta_\phi)^{\ast }\Psi$ is
completely defined by its action on arbitrary test functions $\xi
\in \S(\RE^n)$. For $\Psi \in \S(\RE^{2n})$ and $\phi \in
\S(\RE^n)$, we have from (\ref{FIWF})
$$
\langle (W^\theta_\phi)^{\ast } \Psi, \xi\rangle=\langle \hat
U^{-1}(\theta) \Psi, \xi \otimes \hat\phi \rangle=\langle \Psi,
\hat U^{-1}(\theta) \xi\otimes \hat\phi \rangle \, .
$$
Since $\hat U^{-1}(\theta):\S'(\RE^{2n}) \longrightarrow
\S'(\RE^{2n})$, this formula extends trivially to $\Psi \in
\S'(\RE^{2n})$.
\end{proof}

All these considerations are of course valid for the standard windowed Wigner transform, which is just a
particular element of the family $W^\theta_\phi$, $\theta \in \RE$.

To proceed let us generalize the Kernel formula (\ref{WK}) for the fractional case. Let the kernel of $\hat
a:\S(\RE^n) \longrightarrow \S'(\RE^n)$ be given by $K_a(x,y) \in \S'(\RE^n \times \RE^n)$ so that
$$
\hat a\psi(x)=\langle K_a(x,y),\psi(y) \rangle.
$$
We then have
\begin{equation}
\langle \hat a \psi, \overline{\phi} \rangle =\langle K_a(x,y), \psi(y) \overline{\phi}(x) \rangle \label{KerF}
\end{equation}
and it is well known that this formula can be re-expressed in terms of the Wigner function as in (\ref{WK}).
This is because
\begin{equation}
\langle K_a(x,y), \psi(y) \overline{\phi}(x) \rangle = \langle a(x,p), W(\psi,\phi)(x,p) \rangle \,
.\label{KerW}
\end{equation}

Let also $\hat b=\hat a^{\dagger}$ be the formal adjoint of $\hat a$, defined by
\begin{equation}
\langle \hat b \psi, \overline{\phi} \rangle= \overline{ \langle \hat a \phi, \overline{\psi} \rangle} \quad ,
\quad \forall \psi,\phi \in \S(\RE^n) \, . \label{adjoint}
\end{equation}
It is well known that $K_b(x,y)=\overline{K_a}(y,x)$. This relation follows immediately from eq.(\ref{adjoint})
by taking into account that
$$
\langle \hat b \psi, \overline{\phi} \rangle=\langle K_b(x,y), \psi(y)  \overline{\phi}(x) \rangle
$$
and that
$$
\overline{\langle \hat a \phi, \overline{\psi} \rangle}= \langle \overline{K_a}(y,x), \psi(y) \overline{\phi}(x)
\rangle.
$$

We can now state the following

\begin{theorem}
Let $\hat a:\S(\RE^n) \longrightarrow \S'(\RE^n)$ be an arbitrary pseudo-differential operator and let $\hat b$
be its formal adjoint. Let $b^\theta$ be the $\theta$-Weyl symbol of $\hat b$. Then
\begin{equation}
\langle \hat a \psi, \overline{\phi} \rangle=\langle \overline{b^\theta},W^\theta(\psi,\phi) \rangle
\label{KerFF}
\end{equation}
for all $\psi,\phi \in \S(\RE^n)$.
\end{theorem}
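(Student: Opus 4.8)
The plan is to transform the right-hand side of (\ref{KerFF}) back into the kernel pairing on the right of (\ref{KerF}), using nothing but the definitions of $W^\theta$ and $b^\theta$ (Definitions~\ref{DefFWF} and~\ref{DefFWS}) and the relation $K_b(x,y)=\overline{K_a}(y,x)$ recorded just above the theorem. By (\ref{KerF}) we have $\langle\hat a\psi,\overline\phi\rangle=\langle K_a(x,y),\psi(y)\overline\phi(x)\rangle$, and since $\overline{K_b}(x,y)=K_a(y,x)$ this already equals $\langle\overline{K_b}(x,y),\psi(x)\overline\phi(y)\rangle$ after relabelling the dummy variables. So the whole task is to show that $\langle\overline{b^\theta},W^\theta(\psi,\phi)\rangle$ equals that same pairing, i.e.\ to unwind the metaplectic operator $\hat U(\theta)$ hidden inside both factors.

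The engine of the argument is a ``metaplectic unitarity'' identity: for $f\in\S'(\RE^{2n})$ and $g\in\S(\RE^{2n})$,
\[
\bigl\langle\,\overline{\hat U(\theta)f}\,,\ \hat U(\theta)g\,\bigr\rangle=\langle\,\overline f\,,\ g\,\rangle ,
\]
$\langle\cdot,\cdot\rangle$ being the bilinear distributional pairing. This holds because $\hat U(\theta)$ is unitary on $L^2(\RE^{2n})$: the transpose with respect to $\langle\cdot,\cdot\rangle$ of the conjugated operator $\overline{\hat U(\theta)}\colon h\mapsto\overline{\hat U(\theta)\overline h}$ equals $\hat U(\theta)^{-1}$, so writing $\overline{\hat U(\theta)f}=\overline{\hat U(\theta)}(\overline f)$ and moving $\overline{\hat U(\theta)}$ to the other slot produces $\hat U(\theta)^{-1}\hat U(\theta)=1$; the extension from $L^2$ to the $\S'$--$\S$ duality is immediate from the mapping properties of $\hat U(\theta)$ and $\hat U(\theta)^{-1}$ recalled before Definition~\ref{DefFWS}. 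Now $\overline{b^\theta}=(2\pi)^{n/2}\,\overline{\hat U(\theta)\,\F_{y\to p}^{-1}[K_b(x,y)]}$ and $W^\theta(\psi,\phi)=(2\pi)^{-n/2}\hat U(\theta)(\psi\otimes\overline{\hat\phi})$; the prefactors cancel, and the identity applied with $f=\F_{y\to p}^{-1}[K_b(x,y)]$ and $g=\psi\otimes\overline{\hat\phi}$ collapses $\langle\overline{b^\theta},W^\theta(\psi,\phi)\rangle$ to $\bigl\langle\,\overline{\F_{y\to p}^{-1}[K_b(x,y)]}\,,\ \psi(x)\overline{\hat\phi}(p)\,\bigr\rangle$.

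To finish I would commute the conjugation past the partial Fourier transform, $\overline{\F_{y\to p}^{-1}[\,\cdot\,]}=\F_{y\to p}[\,\overline{\,\cdot\,}\,]$, and then transpose $\F_{y\to p}$ across the pairing; since its kernel $e^{-iy\cdot p}$ is symmetric the transpose is $\F_{p\to y}$ acting in the remaining variable, so the expression becomes $\bigl\langle\,\overline{K_b}(x,y)\,,\ \psi(x)\,\F_{p\to y}[\overline{\hat\phi}(p)](y)\,\bigr\rangle$. Because $\hat\phi=\F_{y\to p}[\phi]$ one has $\F_{p\to y}[\overline{\hat\phi}]=\overline{\F_{p\to y}^{-1}[\hat\phi]}=\overline\phi$, which leaves $\langle\,\overline{K_b}(x,y)\,,\ \psi(x)\overline\phi(y)\,\rangle$; by $\overline{K_b}(x,y)=K_a(y,x)$ and the relabelling $x\leftrightarrow y$ this is precisely $\langle K_a(x,y),\psi(y)\overline\phi(x)\rangle=\langle\hat a\psi,\overline\phi\rangle$, as wanted.

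The only delicate point is the bookkeeping of conjugations and transposes: one must keep the bilinear pairing $\langle\cdot,\cdot\rangle$ strictly separate from the sesquilinear $L^2$ inner product and know how complex conjugation commutes — up to inversion — with $\hat U(\theta)$ and with $\F_{y\to p}$. The one substantive fact behind all of it is $(\,\overline{\hat U(\theta)}\,)^{t}=\hat U(\theta)^{-1}$ for the unitary $\hat U(\theta)$; everything else is routine, and for $\theta=\theta_0$ — where $b^{\theta_0}$ is the ordinary Weyl symbol of $\hat a^\dagger$, so $\overline{b^{\theta_0}}$ is the Weyl symbol of $\hat a$ — the statement reduces to the classical kernel identity (\ref{WK}).
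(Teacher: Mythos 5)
Your proof is correct and is essentially the paper's own argument run in the opposite direction: both rest on the unitarity of $\hat U(\theta)$ (together with the partial Fourier transform) expressed in the bilinear distributional pairing, plus the relation $K_b(x,y)=\overline{K_a}(y,x)$. The only cosmetic difference is that the paper bundles $\hat U(\theta)\F_{y\to p}^{-1}$ into a single unitary inserted on the kernel side, whereas you strip off $\hat U(\theta)$ first and then handle $\F_{y\to p}^{-1}$ by conjugation and transposition.
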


\begin{proof}
We have from (\ref{KerF})
\begin{align*}
\langle \hat a \psi, \overline{\phi} \rangle & =\overline{ \langle \overline{K_a}(y,x),
\overline{\psi(x)  \overline{\phi}(y)} \rangle }\\
&=\overline{ \langle \hat U(\theta) \F_{y\to p}^{-1} \left[\overline{K_a}(y,x)\right], \overline{ \hat U(\theta)
\F_{y\to p}^{-1} \left[  \psi(x)  \overline{\phi}(y) \right]} \rangle } \\
&=(2\pi)^{n/2}\langle \overline{ \hat U(\theta) \F_{y\to p}^{-1} \left[\overline{K_a}(y,x)\right]},
W^\theta(\psi,\phi)(x,p) \rangle \, .
\end{align*}
To complete the proof we just notice that $K_b(x,y)=\overline{K_a}(y,x)$ and use the definition of $\theta$-Weyl
symbol (\ref{FWS}).
\end{proof}

Finally, notice that in the Weyl case $\theta=\theta_0$ we have $\overline{b^{\theta_0}}=a^{\theta_0}=a$ and so
formula (\ref{KerFF}) reduces to the standard formula (\ref{WK}).

\subsection{$\theta$-Weyl calculus and quantization}

From the definitions \ref{DefFWF} and \ref{DefFWS} we immediately realize that
\begin{eqnarray}
W^\theta(\psi,\phi) &=& \hat U(\theta-\theta_0) W(\psi,\phi) \nonumber \\
a^\theta &=& \hat U(\theta-\theta_0) a \, . \label{SymSym}
\end{eqnarray}
where $a \equiv a^{\theta_0}$ and $a^\theta$ are the Weyl and the $\theta$-Weyl symbols of an arbitrary
pseudo-differential operator $\hat a$, respectively. These formulas allow us to relate the standard and the
fractional Weyl calculus. Let us denote by $\hat a \overset{\theta}{\longleftrightarrow} a^\theta $ the
association between the $\theta$-symbol $a^\theta= \hat U(\theta-\theta_0) a$ and the pseudo-differential
operator $\hat a \overset{\text{Weyl}}{\longleftrightarrow} a$. Then

\begin{theorem}
Let $\hat a: \S(\RE^n) \longrightarrow \S'(\RE^n)$ and $\hat b:\S(\RE^n) \longrightarrow \S(\RE^n)$ be two
arbitrary pseudo-differential operators. Let $\hat a \overset{\theta}{\longleftrightarrow} a^\theta $ and $\hat
b \overset{\theta}{\longleftrightarrow} b^\theta $. Then
$$
\hat a \hat b \overset{\theta}{\longleftrightarrow} a^\theta *_\theta b^\theta
$$
where the product $*_\theta$ is given by
\begin{equation}
a^\theta *_\theta b^\theta = \hat U(\theta -\theta_0) \left[ \left(\hat U^{-1}(\theta -\theta_0) a^\theta
\right) *_M \left(\hat U^{-1}(\theta -\theta_0) b^\theta \right) \right] \, . \label{Tproduct}
\end{equation}
Here $*_M$ is the standard Moyal product \cite{Gosson11}
$$
a*_Mb(z)=\left( \tfrac{1}{4\pi} \right)^{2n} \int_{\RE^{4n}} e^{\tfrac{i}{2} \sigma(z',z'')} a(z+\tfrac{1}{2}z')
b(z-\tfrac{1}{2}z'')\, dz' \, dz'' \,
$$
where $z=(x,p)$ are canonical coordinates, and $\sigma$ is the standard symplectic form on $\RE^{2n}$.
\end{theorem}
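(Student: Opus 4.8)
The plan is to reduce the statement to the \emph{standard} Weyl composition formula by transporting it through the metaplectic intertwiner $\hat U(\theta-\theta_0)$ already identified in (\ref{SymSym}). First I would recall that, by the very definition of the association $\hat a\overset{\theta}{\longleftrightarrow}a^{\theta}$ together with (\ref{SymSym}), passing from the ordinary Weyl symbol $a=a^{\theta_0}$ of $\hat a$ to its $\theta$-Weyl symbol is precisely the application of $\hat U(\theta-\theta_0)$; and, as noted above, $\hat U(\theta)\in\mathrm{Mp}(4n)$ is simultaneously a topological isomorphism of $\S(\RE^{2n})$, a unitary operator on $L^2(\RE^{2n})$, and (by duality) a topological isomorphism of $\S'(\RE^{2n})$. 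In particular $a^{\theta}$ determines $a$ through $a=\hat U^{-1}(\theta-\theta_0)a^{\theta}$, and likewise $b=\hat U^{-1}(\theta-\theta_0)b^{\theta}$.

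Next I would invoke the classical Weyl calculus in the required generality: if $\hat a\overset{\text{Weyl}}{\longleftrightarrow}a$ and $\hat b\overset{\text{Weyl}}{\longleftrightarrow}b$, with $\hat b:\S(\RE^n)\to\S(\RE^n)$ so that the composite $\hat a\hat b:\S(\RE^n)\to\S'(\RE^n)$ is a well-defined pseudo-differential operator, then the Weyl symbol of $\hat a\hat b$ is the Moyal product $a*_M b$, with the twisted-convolution integral read in the distributional sense \cite{Gosson11}. Applying $\hat U(\theta-\theta_0)$ to this identity and inserting $a=\hat U^{-1}(\theta-\theta_0)a^{\theta}$, $b=\hat U^{-1}(\theta-\theta_0)b^{\theta}$, the $\theta$-Weyl symbol of $\hat a\hat b$ is
$$
\hat U(\theta-\theta_0)\big(a*_M b\big)=\hat U(\theta-\theta_0)\Big[\big(\hat U^{-1}(\theta-\theta_0)a^{\theta}\big)*_M\big(\hat U^{-1}(\theta-\theta_0)b^{\theta}\big)\Big],
$$
which is exactly $a^{\theta}*_{\theta}b^{\theta}$ as defined in (\ref{Tproduct}); this is the assertion $\hat a\hat b\overset{\theta}{\longleftrightarrow}a^{\theta}*_{\theta}b^{\theta}$.

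The only point requiring genuine care — and hence the main obstacle — is the functional-analytic bookkeeping: one must check that the classical composition $a*_M b$ is legitimate under the stated hypotheses ($\hat a:\S\to\S'$, $\hat b:\S\to\S$), so that $a*_M b\in\S'(\RE^{2n})$, and that $\hat U(\theta-\theta_0)$ can then be moved across it without any convergence issue. Because $\hat U(\theta)$ is continuous on $\S(\RE^{2n})$, on $L^2(\RE^{2n})$, and on $\S'(\RE^{2n})$ alike, once the standard Weyl composition formula is granted in this generality the transport is automatic and no new estimates are needed. I would close with the consistency check: for $\theta=\theta_0$ the operator $\hat U(\theta-\theta_0)$ is the identity, so $*_{\theta_0}$ reduces to the ordinary Moyal product $*_M$ and the formula recovers the classical Weyl composition.
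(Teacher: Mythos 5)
Your proposal is correct and follows essentially the same route as the paper's own proof: reduce to the standard Weyl composition formula $\hat a\hat b\overset{\text{Weyl}}{\longleftrightarrow}a*_M b$, note that the $\theta$-symbol association is by definition the application of $\hat U(\theta-\theta_0)$ to the Weyl symbol, and substitute $a=\hat U^{-1}(\theta-\theta_0)a^\theta$, $b=\hat U^{-1}(\theta-\theta_0)b^\theta$. Your added remarks on the continuity of $\hat U(\theta)$ on $\S$, $L^2$ and $\S'$ and the consistency check at $\theta=\theta_0$ are sensible elaborations of what the paper leaves implicit.
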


\begin{proof}
Let $\hat a \overset{\rm Weyl}{\longleftrightarrow} a $ and $\hat b \overset{\rm Weyl}{\longleftrightarrow} b$.
Then \cite{Gosson11}
$$
\hat a \hat b \overset{\rm Weyl}{\longleftrightarrow} a *_M b \, .
$$
The association between an arbitrary $\hat a $ and its $\theta$-symbol $a^\theta$ is defined by $\hat a
\overset{\theta}{\longleftrightarrow} a^\theta = \hat U(\theta -\theta_0) a$ where $\hat a \overset{\rm
Weyl}{\longleftrightarrow} a $. Hence, the $\theta$-symbol of $\hat a \hat b$ is
$$
a^\theta *_\theta b^\theta = \hat U(\theta-\theta_0) (a *_M b) \, .
$$
Using $a= \hat U^{-1} (\theta-\theta_0) a^\theta $ and $b= \hat U^{-1} (\theta-\theta_0) b^\theta $ we get
(\ref{Tproduct}).

\end{proof}

It easily follows from (\ref{Tproduct}) that $*_\theta$ is non-commutative, distributive and associative. Let us
prove this last property. Since $*_M$ is associative we have (let $\hat U= \hat U(\theta-\theta_0)$)
\begin{eqnarray}
(a^\theta *_\theta b^\theta) *_\theta c^\theta & = & \hat U \left[ \left[ \left(\hat U^{-1} a^\theta \right) *_M
\left(\hat U^{-1} b^\theta \right) \right] *_M \left(\hat U^{-1} c^\theta \right) \right] \nonumber \\
&=& \hat U \left[  \left(\hat U^{-1} a^\theta \right) *_M \left[
\left(\hat U^{-1} b^\theta \right)  *_M \left(\hat U^{-1} c^\theta \right) \right]\right] \nonumber \\
&=& a^\theta *_\theta  (b^\theta *_\theta c^\theta ) \, .
\end{eqnarray}

To proceed let us determine the $\theta$-symbol of the adjoint operator. Let, as before, $\hat b=\hat
a^{\dagger}$ be the formal adjoint of $\hat a$. Then the Weyl symbols of $\hat a$ and $\hat b$ satisfy
$a=\overline{b}$ and we have for the $\theta$-symbols
\begin{equation}
b^{\theta_0+\alpha}=\hat U(\alpha) b= \hat U(\alpha) \overline{a}= \overline{\hat U^{-1}(\alpha)
a}=\overline{a^{\theta_0-\alpha}} \, . \label{AdjSym}
\end{equation}
Hence, and in general, the $\theta$-symbol of a self-adjoint operator is not real. The exception is the standard
Weyl case.

Using the result (\ref{AdjSym}) we can re-write the kernel formula (\ref{KerFF}) in the form
\begin{equation}
\langle \hat a \psi, \overline{\phi} \rangle=\langle a^{\theta_0-\alpha},W^{\theta_0+\alpha}(\psi,\phi) \rangle
\label{KerA}
\end{equation}
for all $\alpha \in \RE$.

Finally, let us remark on a few features of the fractional Weyl quantization. The next result is a trivial
consequence of (\ref{KerA})

\begin{corollary}
The expectation value of a quantum observable $\hat a$ in a state
$\psi \in \S(\RE^n)$ is given in terms of its $\theta$-symbol by
\begin{equation}
\langle \hat a \psi, \overline{\psi} \rangle= \int_{\RE^{2n}} \overline{a^\theta(z)} W^\theta(\psi,\psi)(z) \,
dz \, . \label{AvgT}
\end{equation}
where the integral is interpreted in the distributional sense.
\end{corollary}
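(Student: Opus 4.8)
The plan is to read off (\ref{AvgT}) from the Kernel formula (\ref{KerA}) by specialising it to the diagonal $\phi=\psi$ and then exploiting the fact that a quantum observable is self-adjoint.

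First I would set $\phi=\psi$ in (\ref{KerA}); for $\psi\in\S(\RE^n)$ every term is meaningful. Writing $\theta=\theta_0+\alpha$, so that $\theta_0-\alpha=2\theta_0-\theta$, the identity becomes
$$
\langle \hat a\psi,\overline\psi\rangle=\langle a^{2\theta_0-\theta},W^\theta(\psi,\psi)\rangle .
$$
Next I would invoke self-adjointness: a quantum observable satisfies $\hat a=\hat a^{\dagger}$, so in the notation of (\ref{AdjSym}) we have $\hat b=\hat a$ and hence $b^\theta=a^\theta$ for every $\theta$. Applying (\ref{AdjSym}) with the parameter $\alpha'=\theta_0-\theta$ then gives $a^{2\theta_0-\theta}=a^{\theta_0+\alpha'}=\overline{a^{\theta_0-\alpha'}}=\overline{a^\theta}$. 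Substituting this into the displayed equation yields $\langle\hat a\psi,\overline\psi\rangle=\langle\overline{a^\theta},W^\theta(\psi,\psi)\rangle$. Finally, since $\psi\in\S(\RE^n)$ the regularity theorem for $W^\theta$ gives $W^\theta(\psi,\psi)\in\S(\RE^{2n})$ while $a^\theta\in\S'(\RE^{2n})$, so this pairing is the distributional bracket, which is precisely what (\ref{AvgT}) denotes by $\int_{\RE^{2n}}\overline{a^\theta(z)}\,W^\theta(\psi,\psi)(z)\,dz$.

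A slicker variant avoids (\ref{AdjSym}) altogether: one applies (\ref{KerFF}) directly with $\phi=\psi$, obtaining $\langle\hat a\psi,\overline\psi\rangle=\langle\overline{b^\theta},W^\theta(\psi,\psi)\rangle$ with $b^\theta$ the $\theta$-symbol of $\hat a^{\dagger}$, and then uses $b^\theta=a^\theta$ from self-adjointness. Either way the computation is a genuine specialisation and I expect no real obstacle. The one point that must be stated with care is exactly the self-adjointness hypothesis: for a general (non-self-adjoint) $\hat a$ the correct identity carries $\overline{b^\theta}$ rather than $\overline{a^\theta}$, because $W^\theta(\psi,\psi)$ ceases to be real once $\theta\neq\theta_0$, so the conjugation on the symbol cannot be absorbed as it can in the standard Weyl case $\theta=\theta_0$. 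Apart from that, the only thing needing attention is the bookkeeping of the reindexing $\alpha\leftrightarrow\theta$ and of the complex conjugations.
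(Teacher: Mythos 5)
Your argument is correct and is essentially the paper's own proof: specialize the kernel formula (\ref{KerA}) to $\phi=\psi$, use self-adjointness of the observable so that $\hat b=\hat a$, and apply (\ref{AdjSym}) to conclude $a^{\theta_0-\alpha}=\overline{a^{\theta_0+\alpha}}$, which after the reindexing $\theta=\theta_0+\alpha$ gives (\ref{AvgT}). Your reindexing and the remark about where self-adjointness enters are both accurate.
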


\begin{proof}
The proof follows from eqs.(\ref{AdjSym}) and (\ref{KerA}) by noticing that for $\hat a$ self-adjoint we have
$\hat b=\hat a$ and so $a^{\theta_0-\alpha}=\overline{a^{\theta_0+\alpha}}$. Notice also that in the standard
Weyl case, we have in addition $a^{\theta_0} \equiv a=\overline{a}$.
\end{proof}

From eq.(\ref{KerA}) we can also determine the marginal probability distributions for the fractional Wigner
function. Let, for instance, $\hat a$ be the projection operator given formally by $|x_0><x_0|$ where $x_0 \in
\RE^n$. We have explicitly
$$
\hat a: \S(\RE^n) \longrightarrow \S'(\RE^n); \, \psi \longmapsto \psi(x_0) \delta(x-x_0)
$$
where $\delta$ is the Dirac delta distribution. The Weyl symbol of
$\hat a$ can be easily calculated from eq.(\ref{WeylSymbol}):
$a(x,p)=\delta(x-x_0)$. Hence, the marginal probability
distribution for the position observable is simply
\begin{eqnarray}
{\mathcal P}(x_0) = \langle \hat a \psi,\overline{\psi}\rangle & = & \int_{\RE^n}\int_{\RE^n} \delta(x-x_0)
W(\psi,\psi)(x,p) \, dx dp \nonumber \\
& = & \int_{\RE^n} W(\psi,\psi)(x_0,p) \, dp \, \nonumber.
\end{eqnarray}
In the fractional case, we also have from eq.(\ref{SymSym}) and eq.(\ref{AvgT})
$$
{\mathcal P}(x_0)=\int_{\RE^n}\int_{\RE^n} \left(\hat U^{-1}(\theta-\theta_0) \delta(x-x_0)\right)
W^\theta(\psi,\psi)(x,p) \, dx dp \, .
$$
However, in general, this formula does not further simplify as in the case of standard Wigner functions.

Another important feature of the Wigner function $W(\psi,\psi)$ is that it is a real quasidistribution. This
property is also not shared by the fractional Wigner functions. For the $\theta$-cross Wigner function we get
from (\ref{SymSym})
\begin{equation}
W^{\theta_0 + \alpha}(\psi,\phi)= \overline{W^{\theta_0-\alpha}(\phi,\psi)} \label{AdjWigner}
\end{equation}
which yields a reality condition only for the case $\alpha=0$ and $\phi=\psi$.

\section{The Bopp representation of quantum
mechanics}

In this section we show that the windowed Wigner transform $W_\phi$ intertwines the Schr\"odinger and the Bopp
representations of quantum mechanics and discuss the implications of this result for the spectral and dynamical
properties of operators in the two representations. This subject was previously studied in several papers
\cite{Dias12,GoLu08,GoLu10}. Here, we present an alternative approach using the metaplectic formalism developed
in the previous sections.

Let $\hat a:\S(\RE^{n}) \longrightarrow \S'(\RE^{n})$ be a generic pseudo-differential operator. In terms of its
Weyl symbol $a \in \S'(\RE^n \times \RE^n)$, the operator can be written \cite{Shubin}
\begin{equation}
\hat a \psi (x)= \frac{1}{(2\pi)^{n}} \int_{\RE^{2n}}
e^{i(x-x')\cdot \xi_x} a (\frac{1}{2}(x+ x'), \xi_x) \psi(x')\,
dx'd\xi_x \label{PDO}
\end{equation}
where the integral is well defined for $a \in \S(\RE^{n}\times \RE^n)$ and should otherwise be interpreted in
the distributional sense. The operator $\hat a$ is formally $\hat a=a(x,-i\partial_x)$ and the mapping $a
\longmapsto \hat a$ yields a precise definition of the {\it standard Schr\"odinger representation} of quantum mechanics.\\

A trivial extension of $\hat a$ to phase space functions is given by the operator $\hat A:\S(\RE^n \times \RE^n)
\longrightarrow \S'(\RE^n \times \RE^n)$ of the form:
\begin{equation}
\hat A \Psi (x,p)= \frac{1}{(2\pi)^{n}} \int_{\RE^{2n}}
e^{i(x-x')\cdot \xi_x} a (\frac{1}{2}(x+ x'), \xi_x) \Psi(x',p)\,
dx'd\xi_x .\label{PSPDO}
\end{equation}
Then

\begin{theorem}
The operator $\hat A$ is a pseudo-differential operator with Weyl symbol
\begin{equation}
A(x,p;\xi_x,\xi_p)=a(x,\xi_x). \label{Sym}
\end{equation}
Moreover, for an arbitrary wave function $\Psi(x,p)=\psi \otimes \overline{\hat\phi}(x,p)\in \S(\RE^{2n})$, we
have
\begin{equation}
\hat A (\psi \otimes \overline{\hat\phi})=(\hat a\psi) \otimes
\overline{\hat\phi}. \label{Int}
\end{equation}
\end{theorem}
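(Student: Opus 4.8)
The plan is to prove the two assertions separately, since they are logically distinct. For the symbol computation \eqref{Sym}, I would start from the defining relation between a pseudo-differential operator and its Weyl symbol, formula \eqref{PDO}, applied in the double phase space $\RE^{2n}\times\RE^{2n}$. Writing out \eqref{PSPDO} and comparing with the canonical form of a Weyl operator in the variables $(x,p)$ with dual variables $(\xi_x,\xi_p)$, one sees that the integration in \eqref{PSPDO} only involves $x,x',\xi_x$ and acts as the identity in the $p$-variable (there is no $p'$ integration and no $\xi_p$ appearing). Thus the kernel of $\hat A$ is $K_A(x,p;x',p') = K_a(x,x')\,\delta(p-p')$, and passing to the Weyl symbol via \eqref{WeylSymbol} in dimension $2n$ — i.e. setting $x\to(x,p)$, $y\to(y,y_p)$ and Fourier transforming in $(y,y_p)$ against $(\xi_x,\xi_p)$ — the delta factor $\delta(y_p)$ collapses the $\xi_p$-dependence and leaves exactly $a(x,\xi_x)$. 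So $A(x,p;\xi_x,\xi_p)=a(x,\xi_x)$, independent of $p$ and $\xi_p$.

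For the intertwining relation \eqref{Int}, I would substitute $\Psi(x',p)=\psi(x')\,\overline{\hat\phi}(p)$ directly into \eqref{PSPDO}. Since $\overline{\hat\phi}(p)$ does not depend on $x'$ or $\xi_x$, it factors out of the integral entirely, and what remains inside is precisely the right-hand side of \eqref{PDO} applied to $\psi$, namely $(\hat a\psi)(x)$. Hence $\hat A(\psi\otimes\overline{\hat\phi})(x,p)=(\hat a\psi)(x)\,\overline{\hat\phi}(p)=(\hat a\psi)\otimes\overline{\hat\phi}(x,p)$, which is \eqref{Int}. This part is essentially a one-line observation once the integral is written out; the factorization of the $p$-dependence is immediate from the special product form of the argument.

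The main obstacle — more a bookkeeping matter than a genuine difficulty — is making the distributional manipulations rigorous. The identity $K_A(x,p;x',p')=K_a(x,x')\,\delta(p-p')$ and the subsequent Fourier transform in \eqref{WeylSymbol} should be justified for $a\in\S'(\RE^n\times\RE^n)$, not just for Schwartz $a$; I would phrase the computation first for $a\in\S(\RE^n\times\RE^n)$, where all integrals converge absolutely, and then extend by the continuity/density argument already used repeatedly in the paper (the Schwartz kernel theorem guarantees $\hat A\in\L(\S(\RE^{2n}),\S'(\RE^{2n}))$, so it has a well-defined Weyl symbol, and both sides of \eqref{Sym} depend continuously on $a$ in $\S'$). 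Care is also needed with the conventions in Notation 3: the factor $(2\pi)^{-n}$ in \eqref{PDO} versus \eqref{PSPDO} must match the dimension count when one views the formula as a $2n$-dimensional Weyl quantization with a trivial $(p,\xi_p)$-block, but since the $\xi_p$-integration is entirely absent in \eqref{PSPDO} this is automatically consistent. With these routine checks in place the theorem follows.
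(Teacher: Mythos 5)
Your proposal is correct and follows essentially the same route as the paper: the paper writes out the $2n$-dimensional Weyl quantization of $A(x,p;\xi_x,\xi_p)=a(x,\xi_x)$ and integrates out $\xi_p$ and $p'$ (producing the $\delta(p-p')$ you identify in the kernel) to recover \eqref{PSPDO}, whereas you run the same Fourier-transform computation in the opposite direction, from the kernel $K_a(x,x')\delta(p-p')$ to the symbol via \eqref{WeylSymbol}; the two are equivalent. Your treatment of \eqref{Int} by factoring $\overline{\hat\phi}(p)$ out of the integral is exactly the paper's argument.
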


\begin{proof}
Let $\hat A$ be the pseudo-differential operator with Weyl symbol $A$ of the form (\ref{Sym}). Then $\hat
A\Psi(x,p)$ is explicitly
\begin{eqnarray*}
&&\hat A \Psi (x,p)\\
&=& \frac{1}{(2\pi)^{2n}} \int_{\RE^{4n}} e^{i[(x-x')\cdot
\xi_x+(p-p')\cdot \xi_p]} A (\frac{x+ x'}{2},\frac{p+ p'}{2},
\xi_x, \xi_p) \Psi(x',p')\,
dx'dp'd\xi_xd\xi_p\\
&=&\frac{1}{(2\pi)^{2n}} \int_{\RE^{4n}} e^{i[(x-x')\cdot
\xi_x+(p-p')\cdot \xi_p]} a (\frac{1}{2}(x+ x'), \xi_x)
\Psi(x',p')\, dx'dp'd\xi_xd\xi_p\\
& =& \frac{1}{(2\pi)^{n}} \int_{\RE^{2n}} e^{i(x-x')\cdot \xi_x} a
(\frac{1}{2}(x+ x'), \xi_x) \Psi(x',p)\, dx'd\xi_x
\end{eqnarray*}
which coincides with the expression (\ref{PSPDO}) exactly.

The intertwining relation (\ref{Int}) follows directly from
(\ref{PSPDO}) for $\Psi(x,p)=\psi(x)\overline{\hat\phi}(p)$.
\end{proof}

Just like $\hat a$, the operator $\hat A$ is also formally $\hat A=a(x,-i\partial_x)$ (but now the fundamental
operators $x\cdot$ and $-i\partial_x$ act on phase space functions). The mapping $a \longmapsto \hat A$ yields
the {\it phase space Schr\"odinger representation} of quantum mechanics. The spectral and dynamical properties
of the operators $\hat a$ and $\hat A$ are equivalent. This is an important property that we now discuss in some
detail. For complete proofs the reader should refer to \cite{Dias12}.

For an arbitrary window $\phi \in \S(\RE^n)$ such that $||\phi||_{L^2(\RE^n)}=1$, let
$$
T_\phi:\S'(\RE^n) \longrightarrow \S'(\RE^{2n}); \psi \longmapsto \Psi= T_\phi \psi=\psi \otimes
\overline{\hat\phi} \, ,
$$
and let
$$
T^*_\phi:\S'(\RE^{2n}) \longrightarrow \S'(\RE^{n}); \Psi \longmapsto \psi= T^*_\phi \Psi
$$
be defined by
$$
\langle T^*_\phi \Psi, \xi \rangle = \langle \Psi, \xi  \otimes \hat\phi \rangle \quad , \quad \forall \xi \in
\S(\RE^n) \, .
$$
These two maps coincide exactly with $W_\phi^\theta$ and
$(W_\phi^\theta)^*$ (given by eq.(\ref{TWT}) and eq.(\ref{FIWF2}),
respectively) for $\theta=0$. Hence, they satisfy the properties
stated in Theorem \ref{Theorem12} (i) and (ii). They also satisfy
the intertwining relations
\begin{equation}
T_\phi \hat a= \hat A T_\phi \qquad , \qquad  T_\phi^* \hat A= \hat a T_\phi^* \, . \label{IRaA}
\end{equation}
The first relation is valid in $\S(\RE^n)$ and is just a restatement of eq.(\ref{Int}). The second relation is
valid in $\S(\RE^{2n})$ and was proved in \cite{Dias12}.

The next theorem is a consequence of these intertwining relations. It shows that the operators $\hat a$ and
$\hat A$ have equivalent spectral properties. The proof can also be found in \cite{Dias12}.

\begin{theorem}
Let $\hat a$ and $\hat A$ be the pseudo-differential operators given by eqs.(\ref{PDO},\ref{PSPDO}),
respectively. Let $\phi \in \S(\RE^n)$ be such that $||\phi||_{L^2(\RE^n)}=1$. Then

(i) The eigenvalues of $\hat a$ and $\hat A$ are the same.

(ii) If $\psi_\lambda$ is an eigenfunction of $\hat a$ then $\Psi_\lambda=T_\phi\psi_\lambda$ is an
eigenfunction of $\hat A$ (associated with the same eigenvalue).

(iii) Conversely, let $\Psi_\lambda$ be an eigenfunction of $\hat A$. If
$\psi_\lambda=T_\phi^*\Psi_\lambda\not=0$ then $\psi_\lambda$ is an eigenfunction of $\hat a$ (associated with
the same eigenvalue).

(iv) If $(\psi_\lambda)_\lambda$ is an orthonormal basis of eigenfunctions of $\hat a$ and $(\phi_\gamma\in
\S(\RE^n))_\gamma$ is an orthonormal basis of $L^2(\RE^n)$ then $(T_{\phi_\gamma}\psi_\lambda)_{\gamma,\lambda}$
is a complete set of eigenfunctions of $\hat A$ and forms an orthonormal basis of $L^2(\RE^{2n})$.

\label{Spectral}
\end{theorem}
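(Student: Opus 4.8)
The plan is to derive everything from the two intertwining relations (\ref{IRaA}) together with the mapping properties of $T_\phi$ and $T_\phi^*$ established in Theorem \ref{Theorem12}(i)--(ii) (specialized to $\theta=0$), namely $T_\phi^*T_\phi=1$ and $T_\phi T_\phi^*=P_\phi$, the orthogonal projection onto $\mathrm{Ran}\,T_\phi$. The key structural fact is that $T_\phi$ is a (non-surjective) isometry with left inverse $T_\phi^*$, so it identifies $L^2(\RE^n)$ unitarily with the closed subspace $\mathrm{Ran}\,T_\phi\subset L^2(\RE^{2n})$, and by (\ref{IRaA}) this identification conjugates $\hat a$ to the restriction of $\hat A$ to that subspace. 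I would not even need the metaplectic machinery here, only that $T_\phi$ and $T_\phi^*$ behave well and intertwine the two operators.

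First I would prove (ii): if $\hat a\psi_\lambda=\lambda\psi_\lambda$, then applying $T_\phi$ and using $T_\phi\hat a=\hat A T_\phi$ gives $\hat A(T_\phi\psi_\lambda)=T_\phi(\hat a\psi_\lambda)=\lambda\,T_\phi\psi_\lambda$, and $T_\phi\psi_\lambda=\psi_\lambda\otimes\overline{\hat\phi}\neq 0$ since $\|\phi\|=1$; so $\Psi_\lambda:=T_\phi\psi_\lambda$ is an eigenfunction of $\hat A$ with the same eigenvalue. This already shows every eigenvalue of $\hat a$ is an eigenvalue of $\hat A$. Next, for (iii): if $\hat A\Psi_\lambda=\lambda\Psi_\lambda$ and $\psi_\lambda:=T_\phi^*\Psi_\lambda\neq 0$, then applying $T_\phi^*$ and using $T_\phi^*\hat A=\hat a T_\phi^*$ gives $\hat a\psi_\lambda=T_\phi^*\hat A\Psi_\lambda=\lambda\,T_\phi^*\Psi_\lambda=\lambda\psi_\lambda$, so $\psi_\lambda$ is an eigenfunction of $\hat a$ with the same eigenvalue. (One should note the domains: the first relation in (\ref{IRaA}) is asserted on $\S(\RE^n)$ and the second on $\S(\RE^{2n})$, so strictly speaking this argument is clean for eigenfunctions lying in the Schwartz spaces; for the general statement one invokes the density/continuity extension already used throughout the paper, or restricts attention to that setting as in \cite{Dias12}.)

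For (i) I would combine the two halves: (ii) gives $\mathrm{spec}_p(\hat a)\subseteq\mathrm{spec}_p(\hat A)$. For the reverse inclusion, given an eigenvalue $\lambda$ of $\hat A$ with eigenfunction $\Psi_\lambda$, it suffices to exhibit some window $\phi$ (or to argue for the fixed $\phi$) with $T_\phi^*\Psi_\lambda\neq 0$; since the family $\{\,\overline{\hat\phi}\,\}$ as $\phi$ ranges over unit vectors in $\S(\RE^n)$ spans a dense set, and $T_\phi^*\Psi = \int \hat\phi(p)\,\Psi(\cdot,p)\,dp$ cannot vanish for all such $\phi$ unless $\Psi_\lambda=0$, we get $\lambda\in\mathrm{spec}_p(\hat a)$ by (iii). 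Hence the eigenvalues coincide. I expect this point --- ensuring $T_\phi^*\Psi_\lambda\neq 0$, i.e.\ that the projection onto $\mathrm{Ran}\,T_\phi$ does not kill a given eigenfunction --- to be the only real subtlety; for a single fixed $\phi$ it need not hold, which is exactly why part (iv) passes to a whole orthonormal basis of windows.

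Finally, for (iv): suppose $(\psi_\lambda)_\lambda$ is an orthonormal basis of eigenfunctions of $\hat a$ and $(\phi_\gamma)_\gamma$ an orthonormal basis of $L^2(\RE^n)$ with each $\phi_\gamma\in\S(\RE^n)$. By (ii) each $T_{\phi_\gamma}\psi_\lambda$ is an eigenfunction of $\hat A$ with eigenvalue $\lambda$. Using the Moyal identity (equivalently the computation $(T_{\phi_\gamma}\psi_\lambda,T_{\phi_{\gamma'}}\psi_{\lambda'})_{L^2(\RE^{2n})}=(\psi_\lambda,\psi_{\lambda'})_{L^2(\RE^n)}\,(\phi_\gamma,\phi_{\gamma'})_{L^2(\RE^n)}=\delta_{\lambda\lambda'}\delta_{\gamma\gamma'}$ from the fractional Moyal identity at $\theta=0$, or directly from $\psi_\lambda\otimes\overline{\hat\phi_\gamma}$ and unitarity of the Fourier transform), the system $(T_{\phi_\gamma}\psi_\lambda)_{\gamma,\lambda}$ is orthonormal. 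For completeness in $L^2(\RE^{2n})$: the set $\{\,\psi_\lambda\otimes\overline{\hat\phi_\gamma}\,\}_{\gamma,\lambda}$ is the tensor product of the orthonormal basis $(\psi_\lambda)_\lambda$ of $L^2(\RE^n)$ with the orthonormal basis $(\overline{\hat\phi_\gamma})_\gamma$ of $L^2(\RE^n)$ (the latter because Fourier transform and conjugation are unitary), and a tensor product of two orthonormal bases is an orthonormal basis of the tensor product Hilbert space $L^2(\RE^n)\otimes L^2(\RE^n)\cong L^2(\RE^{2n})$. Hence $(T_{\phi_\gamma}\psi_\lambda)_{\gamma,\lambda}$ is an orthonormal basis of $L^2(\RE^{2n})$ consisting of eigenfunctions of $\hat A$, which is the claim. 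The main obstacle in writing this out carefully is bookkeeping the Schwartz-space domains so that all the intertwining identities are legitimately applied; none of the individual steps is hard once the isometry/intertwining package from Theorem \ref{Theorem12} and (\ref{IRaA}) is in hand.
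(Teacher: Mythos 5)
Your proof is correct and follows precisely the route the paper indicates: Theorem \ref{Spectral} is presented there as a consequence of the intertwining relations (\ref{IRaA}) together with the isometry/projection identities $T_\phi^*T_\phi=1$ and $T_\phi T_\phi^*=P_\phi$ from Theorem \ref{Theorem12}, with the details deferred to \cite{Dias12}, and that is exactly the package you use. Your treatment of the one genuine subtlety --- ensuring $T_\phi^*\Psi_\lambda\neq 0$ for \emph{some} admissible window in part (i) via density of the windows, while acknowledging it may fail for a fixed $\phi$ --- and the tensor-product-of-orthonormal-bases argument for (iv) are both sound.
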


The next theorem concerns the dynamical properties.

\begin{theorem}
Let $\hat a$ and $\hat A$ be given by eqs.(\ref{PDO},\ref{PSPDO}), respectively. Let $\phi \in \S(\RE^n)$
satisfy $||\phi||_{L^2(\RE^n)}=1$. Then:

(i) If $\psi(x,t)$ is the solution of the initial value problem
\begin{equation}
i \frac{\partial  \psi}{\partial t} = \hat a \psi \quad , \quad \psi(\cdot,0)= \psi_0(\cdot) \in \S(\RE^n)
\label{ivp1}
\end{equation}
then $\Psi=T_\phi \psi$ is the solution of
\begin{equation}
i \frac{\partial  \Psi}{\partial t} = \hat A \Psi \quad , \quad \Psi(\cdot,0)= \Psi_0(\cdot) \in \S(\RE^{2n})
\label{ivp2}
\end{equation}
where $\Psi_0=T_\phi \psi_0$.

(ii) Conversely, if $\Psi$ is the solution of the initial value problem (\ref{ivp2}) then $\psi=T^*_\phi \Psi$
is the solution of (\ref{ivp1}). \label{Dynamics}
\end{theorem}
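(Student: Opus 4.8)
The plan is to deduce this dynamical equivalence from the intertwining relations \eqref{IRaA} together with the spectral/functional-calculus machinery already set up for $\hat a$ and $\hat A$. The key observation is that $T_\phi$ and $T_\phi^*$ conjugate the two operators on the Schwartz spaces, and both unitary evolution groups are generated by (essentially) self-adjoint extensions, so it suffices to push the relation $T_\phi\hat a=\hat A T_\phi$ through the exponential.

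First I would treat part (i). Suppose $\psi(x,t)$ solves \eqref{ivp1} with $\psi_0\in\S(\RE^n)$, so that $\psi(\cdot,t)=e^{-it\hat a}\psi_0$, and set $\Psi(\cdot,t)=T_\phi\psi(\cdot,t)=\psi(\cdot,t)\otimes\overline{\hat\phi}$. Since $T_\phi$ is continuous from $\S'(\RE^n)$ to $\S'(\RE^{2n})$ and acts as the fixed bounded operator ``tensor with $\overline{\hat\phi}$'', it commutes with $t$-differentiation, giving $i\,\partial_t\Psi=T_\phi(i\,\partial_t\psi)=T_\phi\hat a\psi$. Now invoke the first relation in \eqref{IRaA}, namely $T_\phi\hat a=\hat A T_\phi$, which is valid on $\S(\RE^n)$ (this is exactly the restatement of \eqref{Int}); since $\psi(\cdot,t)\in\S(\RE^n)$ for every $t$ — because $e^{-it\hat a}$ preserves $\S(\RE^n)$ under the standing assumptions, $\hat a$ being a nice pseudo-differential operator — we may apply it to obtain $i\,\partial_t\Psi=\hat A T_\phi\psi=\hat A\Psi$. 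The initial condition is immediate: $\Psi(\cdot,0)=T_\phi\psi_0=\Psi_0\in\S(\RE^{2n})$. By the uniqueness of the solution of \eqref{ivp2} in $\S(\RE^{2n})$ (the generator $\hat A$ being self-adjoint, cf.\ the argument in the Lemma of Section~2.2), $\Psi=T_\phi\psi$ is \emph{the} solution, which proves (i).

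For part (ii) I would run the same argument backwards using the second relation $T_\phi^*\hat A=\hat a T_\phi^*$ of \eqref{IRaA}, valid on $\S(\RE^{2n})$. Given the solution $\Psi(\cdot,t)=e^{-it\hat A}\Psi_0$ of \eqref{ivp2}, set $\psi(\cdot,t)=T_\phi^*\Psi(\cdot,t)$. Continuity of $T_\phi^*$ and its commutation with $\partial_t$ give $i\,\partial_t\psi=T_\phi^*\hat A\Psi=\hat a T_\phi^*\Psi=\hat a\psi$, where the middle equality again requires $\Psi(\cdot,t)\in\S(\RE^{2n})$ for all $t$, which holds because $e^{-it\hat A}$ preserves $\S(\RE^{2n})$. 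The initial datum is $\psi(\cdot,0)=T_\phi^*\Psi_0$; when $\Psi_0=T_\phi\psi_0$ as in (i) one has $T_\phi^*\Psi_0=\psi_0$ by Theorem~\ref{Theorem12}(i) (the relation $(W_\phi^\theta)^*W_\phi^\theta=1$ specialized to $\theta=0$), so the two parts are mutually consistent. Uniqueness of the solution of \eqref{ivp1} in $\S(\RE^n)$ finishes the proof.

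The one point that needs care — and that I expect to be the main obstacle — is the invariance of the Schwartz spaces under the two evolution groups, i.e.\ that $\psi(\cdot,t)\in\S(\RE^n)$ and $\Psi(\cdot,t)\in\S(\RE^{2n})$ for all $t$, since \eqref{IRaA} is only asserted on the Schwartz classes rather than on all of $L^2$. This is not automatic for an arbitrary symbol $a\in\S'$, so strictly one either restricts to symbols for which $\hat a$ (hence $\hat A$, by Theorem on the Weyl symbol $A(x,p;\xi_x,\xi_p)=a(x,\xi_x)$) generates a strongly continuous group preserving $\S$, or one reinterprets \eqref{ivp1}--\eqref{ivp2} at the level of $L^2$ and extends \eqref{IRaA} to bounded operators by density, using that $T_\phi$, $T_\phi^*$ are bounded. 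Either route is routine given the results already in the paper; I would simply remark that it follows from the standing hypotheses on $\hat a$ and refer to \cite{Dias12} for the technical details, exactly as the surrounding text does for Theorem~\ref{Spectral}.
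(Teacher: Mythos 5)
Your proposal is correct and follows essentially the same route as the paper, which proves the theorem by combining the intertwining relations (\ref{IRaA}) with the fact that $\partial_t$ commutes with $T_\phi$ and $T_\phi^*$, deferring technical details to \cite{Dias12}. Your additional remark about the invariance of the Schwartz classes under the two evolution groups is a legitimate point that the paper's one-line argument glosses over, and you resolve it the same way the paper implicitly does, by appeal to the standing hypotheses and the reference.
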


The proof follows directly from the intertwining relations and the fact that the time derivative
operator commutes with both $T_\phi$ and $T^*_\phi$ \cite{Dias12}.\\

The results of the previous theorems can be extended to any other operator that is unitarily related with $\hat
A$. This is the case of the Bopp operators, formally defined by the mapping
\begin{equation}
a(x,\xi_x) \longmapsto \hat A_{B}=a(x+\frac{i}{2}\partial_p,p-\frac{i}{2}\partial_x). \label{FormalBopp}
\end{equation}
This map attributes to each phase space symbol $a \in \S'(\RE^{2n})$, an operator $\hat A_{B}$ acting on wave
functions $\Psi(x,p)$ with support on the phase space. The Bopp representation is closely related with the
deformation quantization of Bayen et. al \cite{Bayen1,Bayen2,Dias04-1,Dias04-2} and was used, in this context,
to prove some general spectral results for the stargenvalue equation \cite{Dias12,GoLu08}.

We now prove that the Bopp representation can be precisely defined by the mapping
\begin{equation}
a \longmapsto \hat A_{B}=\hat S \hat A \hat S^{-1} \label{BR}
\end{equation}
where $a\in \S'(\RE^{2n})$, $\hat A$ is given by eq.(\ref{PSPDO}) and $\hat S=\hat U(\theta_0)$ by eq.(\ref{U}).
Let us consider first the general result

\begin{proposition}[Metaplectic covariance property]
Let $\hat A \overset{\text{Weyl}}{\longleftrightarrow} A$ be a pseudo-differential operator with Weyl symbol
$A\in \S'(\RE^{4n})$, let $\hat S\in$ Mp$(4n)$ and let $S \in$ Sp$(4n)$ be its natural projection onto Sp$(4n)$.
Then
\begin{equation}
\hat A_S=\hat S \hat A \hat S^{-1} \overset{\text{Weyl}}{\longleftrightarrow} A_S=A \circ S^{-1} \, .
\label{MC}
\end{equation}
\label{prop14}
\end{proposition}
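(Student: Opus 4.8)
The plan is to prove the metaplectic covariance property~(\ref{MC}) by reducing it to the well-known covariance of the Weyl calculus under the action of the metaplectic group, which in the present language amounts to the statement that conjugation of a Weyl operator by a metaplectic operator $\hat S$ implements the pullback by $S^{-1}$ at the level of symbols. Concretely, I would first recall the exact correspondence $\hat A \overset{\text{Weyl}}{\longleftrightarrow} A$ via the integral formula of the type in~(\ref{PDO}) (now in dimension $4n$, with phase-space variables $z=(x,p)$ and dual variables $\zeta=(\xi_x,\xi_p)$), and note that since $\hat S\in\mathrm{Mp}(4n)$ the conjugate $\hat A_S=\hat S\hat A\hat S^{-1}$ is again a continuous operator $\S(\RE^{2n})\to\S'(\RE^{2n})$ and hence a pseudo-differential operator with a well-defined Weyl symbol $A_S\in\S'(\RE^{4n})$. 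The content of the proposition is the identification $A_S = A\circ S^{-1}$.

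The cleanest route is via the Weyl--Heisenberg (displacement) operators. Write $\hat T(w)$, $w\in\RE^{4n}$, for the Heisenberg operator on $L^2(\RE^{2n})$ (displacement by $w$ in the $4n$-dimensional phase space), so that the Weyl correspondence can be written as the ``Fourier decomposition''
\begin{equation*}
\hat A = \left(\tfrac{1}{2\pi}\right)^{2n}\!\!\int_{\RE^{4n}} A_{\sigma}(w)\,\hat T(w)\,dw,
\end{equation*}
where $A_\sigma$ is the symplectic Fourier transform of $A$. The key algebraic fact is the intertwining (symplectic covariance of the Heisenberg group) $\hat S\,\hat T(w)\,\hat S^{-1}=\hat T(Sw)$, valid up to the metaplectic phase which cancels in the conjugation. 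Applying $\hat S(\cdot)\hat S^{-1}$ under the integral sign and changing variables $w\mapsto S^{-1}w$ (using $|\det S|=1$ since $S\in\mathrm{Sp}(4n)$, together with $\sigma(Sw,\cdot)=\sigma(w,S^{-1}\cdot)$ to track how the symplectic Fourier transform transforms) gives precisely that $(A_S)_\sigma(w)=A_\sigma(S^{-1}w)\cdot|\det S|^{-1}$ appropriately interpreted, which on inverting the symplectic Fourier transform yields $A_S=A\circ S^{-1}$. All of this is standard (see \cite{Folland,Gosson06,Gosson11}); I would state it as a known identity and give the short computation above.

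Alternatively, and perhaps more in the spirit of this paper, one can argue through the kernel/Wigner pairing~(\ref{WK}): for $\Psi,\Phi\in\S(\RE^{2n})$ one has $\langle \hat A_S\Psi,\overline\Phi\rangle=\langle\hat A\,\hat S^{-1}\Psi,\overline{\hat S^{-1}\Phi}\rangle=\langle A, W(\hat S^{-1}\Psi,\hat S^{-1}\Phi)\rangle$, and then invoke the metaplectic covariance of the cross-Wigner function, $W(\hat S^{-1}\Psi,\hat S^{-1}\Phi)=W(\Psi,\Phi)\circ S$, to rewrite this as $\langle A\circ S^{-1},W(\Psi,\Phi)\rangle$ after the change of variables $z\mapsto S^{-1}z$ (again $|\det S|=1$); comparing with $\langle A_S,W(\Psi,\Phi)\rangle$ and using that the pairing determines the symbol gives $A_S=A\circ S^{-1}$. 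The density of $\S(\RE^{2n})$ extends everything to the stated generality $A\in\S'(\RE^{4n})$.

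\textbf{The main obstacle} is bookkeeping rather than conceptual: keeping careful track of the metaplectic phase in $\hat S\hat T(w)\hat S^{-1}=\hat T(Sw)$ (it is genuinely phase-free only because the two factors $\hat S$ and $\hat S^{-1}$ contribute opposite phases, so the conjugation descends exactly to $\mathrm{Sp}(4n)$), and making sure the change of variables in the (symplectic) Fourier integral is applied with the correct Jacobian and the correct transformation of the symplectic form — since $S$ is symplectic both of these are harmless, but one must say so. The extension from Schwartz symbols to tempered-distribution symbols is routine by duality, using that $\hat S$ and $\hat S^{-1}$ are continuous on both $\S(\RE^{2n})$ and $\S'(\RE^{2n})$, as already used elsewhere in the paper.
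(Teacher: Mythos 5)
Your proposal is correct, but note that the paper does not actually prove Proposition \ref{prop14}: it simply defers to the literature (``For a proof see \cite{Gosson11,Wong}''). What you have written out is essentially the standard proof found in those references, in two equivalent guises. The first route --- expanding $\hat A$ over Heisenberg--Weyl displacement operators via the symplectic Fourier transform of the symbol and using the exact conjugation identity $\hat S\,\hat T(w)\,\hat S^{-1}=\hat T(Sw)$ (exact because the $\pm$ sign ambiguity of the metaplectic lift cancels under conjugation), followed by the substitution $w\mapsto S^{-1}w$ with unit Jacobian and the compatibility of the symplectic Fourier transform with symplectic changes of variables --- is the proof given in \cite{Gosson11}. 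The second route, through the pairing $\langle \hat A\Psi,\overline{\Phi}\rangle=\langle A,W(\Psi,\Phi)\rangle$ and the covariance $W(\hat S^{-1}\Psi,\hat S^{-1}\Phi)=W(\Psi,\Phi)\circ S$ of the cross-Wigner transform, is logically equivalent and is the one that meshes best with the rest of this paper, since equation (\ref{WK}) is already in place; its only cost is that the Wigner covariance must itself be established (usually by checking it on generators of $\mathrm{Mp}$), so it is not independent of the first argument. Two cosmetic remarks: your displacement operator $\hat T(w)$ collides with the paper's $\hat T(\theta)$ from (\ref{TT}), so a different symbol should be used; and the factor $|\det S|^{-1}$ you carry along is identically $1$ and is best suppressed rather than left ``appropriately interpreted.''
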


For a proof see \cite{Gosson11,Wong}. We then have:

\begin{theorem}
The operators $\hat A_{B}$ (given by eq.(\ref{BR})) are pseudo-differential operators with Weyl symbols
\begin{equation}
A_{B}(x,p;\xi_x,\xi_p)=a(x-\frac{\xi_p}{2},p+\frac{\xi_x}{2}) \, .\label{SBs}
\end{equation}
\end{theorem}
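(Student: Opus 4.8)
The plan is to invoke the metaplectic covariance property and reduce the computation to a linear change of variables. Recall from eq.~(\ref{Sym}) that the operator $\hat A$ of eq.~(\ref{PSPDO}) is the Weyl operator with symbol $A(x,p;\xi_x,\xi_p)=a(x,\xi_x)$, and that $\hat S=\hat U(\theta_0)\in\mathrm{Mp}(4n)$ projects onto the symplectomorphism $S\in\mathrm{Sp}(4n)$ given explicitly by eq.~(\ref{ST}) (Theorem~\ref{CSym}). Since $\hat A_{B}=\hat S\hat A\hat S^{-1}$ by the definition~(\ref{BR}), Proposition~\ref{prop14} yields immediately $\hat A_{B}\overset{\text{Weyl}}{\longleftrightarrow}A_{B}=A\circ S^{-1}$; the statement is therefore proved as soon as $S^{-1}$ is computed and inserted into $A$.

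First I would invert the linear map~(\ref{ST}). Writing primed letters for the image coordinates, (\ref{ST}) reads $x'=x/2+\xi_p/2$, $\xi_x'=\xi_x-p$, $p'=p/2+\xi_x/2$, $\xi_p'=\xi_p-x$, and exactly as in the proof of Theorem~\ref{CSym} it decouples into the $(x,\xi_p)$-block and the $(p,\xi_x)$-block. Solving $\xi_p'=\xi_p-x$ together with $x'=x/2+\xi_p/2$ gives $x=x'-\xi_p'/2$ and $\xi_p=x'+\xi_p'/2$, and symmetrically $p=p'-\xi_x'/2$, $\xi_x=p'+\xi_x'/2$. Hence
\begin{equation*}
S^{-1}:\RE^{4n}\longrightarrow\RE^{4n},\quad\left\{\begin{array}{l}x\longmapsto x-\xi_p/2\\ \xi_x\longmapsto p+\xi_x/2\\ p\longmapsto p-\xi_x/2\\ \xi_p\longmapsto x+\xi_p/2\end{array}\right..
\end{equation*}
Since $A(x,p;\xi_x,\xi_p)=a(x,\xi_x)$ depends only on the first and third coordinates, $A\circ S^{-1}(x,p;\xi_x,\xi_p)=a\bigl(x-\tfrac{\xi_p}{2},\,p+\tfrac{\xi_x}{2}\bigr)$, which is exactly~(\ref{SBs}).

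I do not expect a genuine obstacle: the substantive content is Proposition~\ref{prop14}, whose proof is cited, and what remains is the elementary inversion of a block-triangular symplectic matrix. The one point that deserves a remark is purely bookkeeping, namely that $\hat A$ is to be read as an element of $\L(\S(\RE^{2n}),\S'(\RE^{2n}))$ on the double phase space $\RE^{2n}\oplus\RE^{2n}$ with conjugate pairs $(x,\xi_x)$ and $(p,\xi_p)$, so that $\hat S\in\mathrm{Mp}(4n)$ acts on the appropriate $L^2$-space and Proposition~\ref{prop14} applies verbatim; this is the same identification already used throughout Sections~2 and~4.
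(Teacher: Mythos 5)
Your proposal is correct and follows essentially the same route as the paper: apply Proposition~\ref{prop14} to get $A_B=A\circ S^{-1}$, compute $S^{-1}$, and substitute into $A(x,p;\xi_x,\xi_p)=a(x,\xi_x)$. The only cosmetic difference is that you invert the linear map~(\ref{ST}) directly, whereas the paper evaluates the flow $s(-\theta_0)$ from (\ref{T1})--(\ref{T2}); both give the same $S^{-1}$ and hence the same symbol (\ref{SBs}).
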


\begin{proof}
It follows from Proposition \ref{prop14} and the definition of $\hat A_{B}$ (\ref{BR}) that $\hat
A_{B}\overset{\text{Weyl}}{\longleftrightarrow} A_{B}=A\circ S^{-1}$ where
$A\overset{\text{Weyl}}{\longleftrightarrow} \hat A$ is given by (\ref{Sym}) and $S^{-1}$ is the inverse of the
projection of $\hat S$ onto Sp$(4n)$. Hence $S^{-1}=s(-\theta_0)$ (cf. Theorem \ref{CSym}) which can be easily
determined from (\ref{T1}) and (\ref{T2}). We have
\begin{equation}
S^{-1}:\RE^{4n} \longrightarrow \RE^{4n}, \quad \left\{
\begin{array}{l}
x \longmapsto x-\xi_p/2 \\
\xi_x \longmapsto \xi_x/2 +p\\
p \longmapsto p-\xi_x/2\\
\xi_p \longmapsto \xi_p/2 +x
\end{array} \right. .
\end{equation}
The result (\ref{SBs}) then follows from $A_{B}=A\circ S^{-1}$ by taking into account (\ref{Sym}).
\end{proof}

For the fundamental operators the mapping (\ref{BR}) yields:
$$
a(x,\xi_x)=x \Longrightarrow \hat A= x\cdot \Longrightarrow \hat A_B=x+\frac{i}{2}\partial_p
$$
$$
a(x,\xi_x)=\xi_x \Longrightarrow \hat A=-i\partial_x \Longrightarrow \hat A_B=p-\frac{i}{2}\partial_x
$$
and more generally, we have formally
$$
a(x,\xi_x) \longmapsto \hat A_B=a(x+\frac{i}{2}\partial_p,p-\frac{i}{2}\partial_x).
$$

Finally, it follows from $\hat A_{B}=\hat S \hat A \hat S^{-1}$ that the spectral and dynamical properties of
the Bopp and the {\it phase space} Schr\"odinger representations are equivalent. In view of Theorems
\ref{Spectral}, \ref{Dynamics} this equivalence can be extended to the {\it standard} Schr\"odinger
representation. Combining (\ref{IRaA}) with $\hat A_{B}=\hat S \hat A \hat S^{-1}$ we obtain the intertwining
relations
$$
\hat A_{B} \hat S T_\phi = \hat S T_\phi \hat a  \quad , \quad T_\phi^* \hat S^{-1} \hat A_{B}= \hat a T_\phi^*
\hat S^{-1}
$$
which can be re-written in terms of the windowed Wigner transform $W_\phi \equiv W_\phi^{\theta_0}$ and its
adjoint $ W_\phi^* \equiv \left( W^{\theta_0}_\phi\right)^*$ (given respectively by (\ref{TWT}) and
(\ref{FIWF},\ref{FIWF2}) for $\theta=\theta_0$)
$$
\hat A_{B} W_\phi=W_\phi \hat a \quad  , \quad  W_\phi^* \hat A_{B} = \hat a W_\phi^* \, .
$$
The Theorems \ref{Spectral} and \ref{Dynamics} are then valid {\it
ipsis verbis} for the pair of operators $\hat a$ and $\hat A_B$ if
we substitute the maps $T_\phi$ and $T^*_\phi$ by the maps
$W_\phi$ and $W^*_\phi$, respectively.

\vspace{0.7cm}

\noindent \textbf{Acknowledgements}. We thank the anonymous
referee for a detailed reading of the paper and for many useful
insights and remarks. Maurice de Gosson has been financed by the
Austrian Research Agency FWF (Projektnummer P20442-N13). Nuno
Costa Dias and Jo\~{a}o Nuno Prata have been supported by the
grant PTDC/MAT/099880/2008 of the Portuguese Science Foundation
(FCT).

\textbf{Author's addresses:}

\begin{itemize}
\item \textbf{Jo\~{a}o Nuno Prata }and\textbf{\ Nuno Costa Dias: }%
Departamento de Matem\'{a}tica. Universidade Lus\'{o}fona de
Humanidades e
Tecnologias. Av. Campo Grande, 376, 1749-024 Lisboa, Portugal and Grupo de F%
\'{\i}sica Matem\'{a}tica, Universidade de Lisboa, Av. Prof. Gama
Pinto 2, 1649-003 Lisboa, Portugal

\item \textbf{Maurice de Gosson:} Universit\"{a}t Wien,
Fakult\"{a}t f\"{u}r Mathematik--NuHAG, Nordbergstrasse 15, 1090
Vienna, Austria
\end{itemize}

\end{document}